\documentclass{article}

\usepackage{microtype}
\usepackage{graphicx}
\usepackage{subcaption}
\usepackage{booktabs} 
\usepackage{multirow,makecell}
\usepackage{hyperref}

\usepackage[accepted]{icml2026}

\usepackage{amsmath}
\usepackage{amssymb}
\usepackage{mathtools}
\usepackage{amsthm}
\usepackage{bm}

\usepackage[capitalize,noabbrev]{cleveref}

\theoremstyle{plain}
\newtheorem{theorem}{Theorem}[section]

\newtheorem{lemma}[theorem]{Lemma}
\newtheorem{corollary}[theorem]{Corollary}
\theoremstyle{definition}

\theoremstyle{remark}

\usepackage[textsize=tiny]{todonotes}

\usepackage{xspace}
\usepackage{tabu} 
\usepackage{multirow}
\usepackage{diagbox}
\usepackage{enumitem}
\usepackage[table]{xcolor}

\renewcommand{\paragraph}[1]{\noindent\textbf{#1.}~}

\begin{document}

\icmltitlerunning{Very Efficient Listwise Multimodal Reranking for Long Documents}

\twocolumn[
  \icmltitle{Very Efficient Listwise Multimodal Reranking for Long Documents}



  \icmlsetsymbol{equal}{*}

  \begin{icmlauthorlist}
    \icmlauthor{Yiqun Sun}{mtri}
    \icmlauthor{Pengfei Wei}{mtri}
    \icmlauthor{Lawrence B.\ Hsieh}{mtri}
  \end{icmlauthorlist}

  \icmlaffiliation{mtri}{Magellan Technology Research Institute (MTRI), Japan}

  \icmlcorrespondingauthor{Pengfei Wei}{pengfei.wei@mtri.co.jp}

  \icmlkeywords{Information Retrieval, Multimodal Reranking, Listwise Reranker, Efficient Inference}

  \vskip 0.3in
]



\printAffiliationsAndNotice{}  
\begin{abstract}
Listwise reranking is a key yet computationally expensive component in vision-centric retrieval and multimodal retrieval-augmented generation (M-RAG) over long documents. 
While recent VLM-based rerankers achieve strong accuracy, their practicality is often limited by long visual-token sequences and multi-step autoregressive decoding. 
We propose \textbf{ZipRerank}, a highly efficient listwise multimodal reranker that directly addresses both bottlenecks. 
It reduces input length via a lightweight query-image early interaction mechanism and eliminates autoregressive decoding by scoring all candidates in a single forward pass.
To enable effective learning, ZipRerank adopts a two-stage training strategy: (i) listwise pretraining on large-scale text data rendered as images, and (ii) multimodal finetuning with VLM-teacher-distilled soft-ranking supervision.
Extensive experiments on the MMDocIR benchmark show that ZipRerank matches or surpasses state-of-the-art multimodal rerankers while reducing LLM inference latency by up to an order of magnitude, making it well-suited for latency-sensitive real-world systems. 
The code is available at \url{https://github.com/dukesun99/ZipRerank}.
\end{abstract}
\section{Introduction}
\label{sec:intro}

\begin{figure}[t]
  \centering
  \includegraphics[width=0.97\columnwidth]{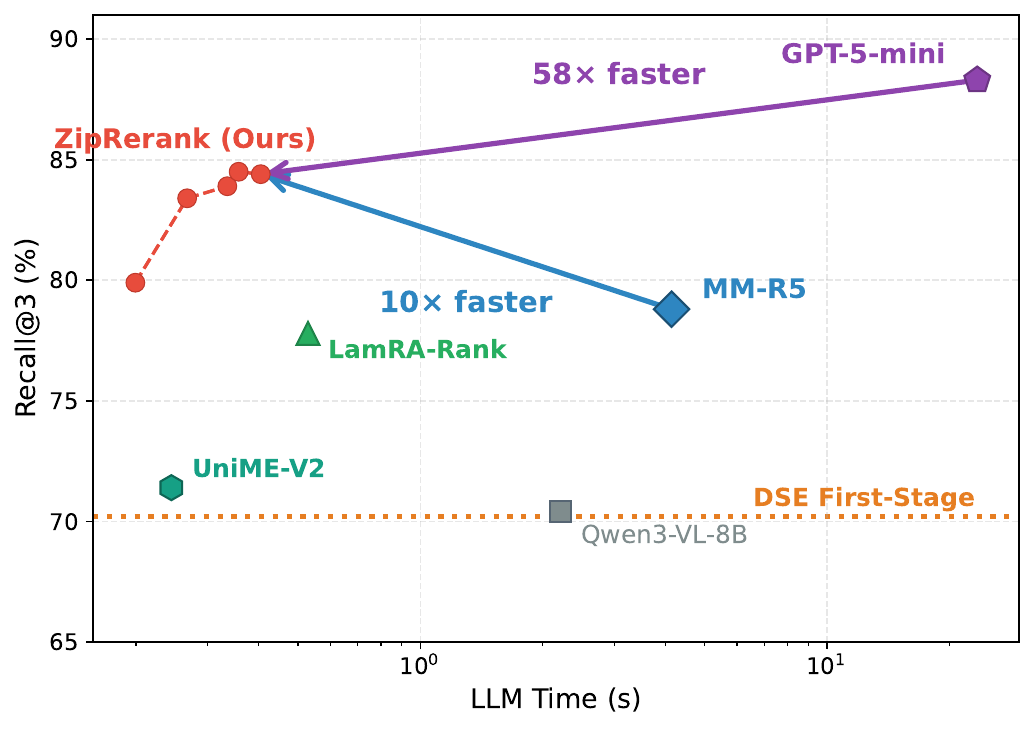}
  \vspace{-0.5em}
  \caption{\textbf{Speed-accuracy trade-off on MMDocIR \cite{dong-etal-2025-mmdocir} for page-level reranking (Recall@3 vs.\ LLM latency).} 
  ZipRerank (red; varying token keep ratios $\rho$) achieves state-of-the-art performance comparable to MM-R5 \cite{xu2025mm} while reducing latency by around 10$\times$, and substantially narrows the gap to GPT-5-mini at about 58$\times$ lower inference cost.}
  \label{fig:efficiency-gap}
  \vspace{-1.5em}
\end{figure}

Vision-centric multimodal retrieval over long documents has emerged as a foundational capability in modern multimodal systems, supporting a wide range of applications such as Visual Question Answering (VQA) \cite{antol2015vqa, yu2017multi, yu2019deep, yu2019activitynet, lerner2024cross, kim2025visual} and Multimodal Retrieval-Augmented Generation (M-RAG) \cite{zhao2023retrieving, mei2025survey, gao2025scaling, you2026knowledge, you2026cut, dai2026mg}.
In these scenarios, documents are often visually rich and span many pages, requiring models to jointly reason over textual queries and large collections of document images \cite{hassan2013multi, lee2024unified, dong-etal-2025-mmdocir}.

In text-only information retrieval, effectiveness and efficiency are typically balanced through a two-stage architecture. 
A first-stage retriever performs large-scale similarity search using dense representations \cite{thakur2021beir, muennighoff2023mteb, sun2025a, sun2025prism, feng2025don, sun2025one}, followed by a reranker that refines top candidates to improve ranking quality \cite{sharifymoghaddam2025rankllm, long2025diver}. 
Within this paradigm, pointwise rerankers score each query-document pair independently and often achieve strong accuracy at the cost of repeated inference \cite{nogueira2019passage, zhang2025qwen3}, while listwise rerankers jointly process all candidates, offering a more efficient alternative by avoiding redundant computation \cite{sun2023chatgpt}.

However, extending listwise reranking to long multimodal documents introduces substantial efficiency challenges. 
As the number of candidate pages increases and each page contributes hundreds or thousands of visual tokens, the resulting input sequences quickly exceed practical limits for Transformer-based models, leading to severe compute and memory overhead \cite{yang2025visionzip, ye2025voco, shao2025tokens}. 
Recent multimodal listwise rerankers, such as MM-R5 \cite{xu2025mm}, demonstrate that incorporating explicit reasoning (e.g., via Chain-of-Thought) can significantly improve performance.
Yet, these gains come at the cost of increased latency and resource consumption, limiting their practicality in real-world, latency-sensitive settings.

The inefficiency of multimodal listwise rerankers can be traced to two dominant sources.
First, the prefill cost becomes prohibitive due to long multimodal contexts, where attention over concatenated visual tokens dominates computation.
Second, autoregressive decoding introduces additional latency, as rerankers often generate multi-token outputs, including reasoning traces and ranking sequences, in a strictly sequential manner.
Even with KV caching \cite{zhang2023h2o, li2024snapkv, tang2024quest, zhang2025pqcache, hao2025omnikv, hao2026deltakv}, this dependency structure leads to significant slowdowns, especially as the number of candidates grows.

In this work, we propose \textbf{ZipRerank}, a framework for training highly efficient listwise multimodal rerankers tailored to long-document retrieval. 
Our approach integrates both training and inference innovations.
On the training side, we adopt a two-stage strategy: Stage 1 leverages large-scale text-only reranking data to learn general listwise ranking behavior, while Stage 2 adapts the model to multimodal settings using VQA-style data augmented with rankings from a strong VLM teacher.
To address the inherent noise in such supervision, we introduce a soft-ranking objective that assigns graded credit across candidates rather than relying on strict pairwise or hard labels.

At inference time, ZipRerank directly targets the two identified bottlenecks through complementary efficiency mechanisms.
First, we reduce input length using a lightweight query-image early interaction module that performs query-aware visual token pruning.
Second, we eliminate autoregressive decoding by adopting a single-logit scoring strategy, enabling the model to rank all candidates in a single forward pass \citep{gangi-reddy-etal-2024-first, chen2024early}.
These designs substantially reduce latency while preserving the ability to model complex cross-modal interactions.

We evaluate ZipRerank on MMDocIR \cite{dong-etal-2025-mmdocir}, a challenging multi-domain benchmark for long multimodal document retrieval.
Across extensive experiments, ZipRerank achieves competitive, and in several cases superior, performance compared to the state-of-the-art multimodal reranker MM-R5 \cite{xu2025mm}, while reducing LLM inference latency by up to an order of magnitude.
Moreover, it consistently improves over strong first-stage retrievers. 
These results show that the efficiency-effectiveness gap in VLM-based reranking can be significantly narrowed via careful co-design of training objectives and inference mechanisms.

Our main contributions are summarized as follows:
\begin{itemize}[nolistsep,left=2pt]
  \item \textbf{Latency Decomposition.} 
  We identify two fundamental bottlenecks in listwise multimodal rerankers for long documents: long-context Transformer computation on visual tokens and autoregressive decoding overhead, providing a clear basis for efficient design.
  
  \item \textbf{Training for Efficient Listwise Reranking.}
  We propose ZipRerank, a two-stage training paradigm that transfers general ranking ability from large-scale text data to multimodal settings via VLM-teacher-distilled supervision and a noise-tolerant soft-ranking objective, enabling robust listwise learning under weak supervision.

  \item \textbf{Single-Pass Efficient Inference.}
  We introduce an end-to-end efficient reranking pipeline that combines query-aware visual token pruning with single-logit listwise scoring, reducing both input length and decoding cost to enable single forward-pass inference.
  
  \item \textbf{Effectiveness-Efficiency Trade-off.}
  Extensive experiments on MMDocIR show that ZipRerank matches or surpasses state-of-the-art rerankers while achieving up to an order of magnitude lower cached LLM latency, with comprehensive ablations validating each design component.
\end{itemize}

\section{Related Work}
\label{sec:related_works}

\paragraph{Rerankers in Information Retrieval}
The retriever-reranker paradigm is a standard framework in information retrieval for balancing efficiency and effectiveness \cite{sharifymoghaddam2025rankllm, nogueira2019passage}. 
In the first stage, dense retrievers \cite{reimers-gurevych-2019-sentence, wang2022text, chen2024bge, li-li-2024-aoe, sun2024diversinews, sun2025text} encode queries and documents into embeddings for scalable similarity search~\cite{huang2017reverse, huang2018accurate, li2019approximate, lei2019sublinear, lei2020locality, karpukhin2020dense, huang2021point, huang2023sah, huang2024diversity}, typically optimized for high recall.
A second-stage reranker then refines the top-$k$ candidates to improve ranking accuracy \cite{nogueira2019passage, pradeep2023rankzephyr, sun2023chatgpt, pradeep2023rankvicuna, sharifymoghaddam2025rankllm}. 
While effective, reranking often incurs significant computational overhead and latency \cite{long2025diver, liu2025reasonrank, rathee2025guiding}.
In contrast, ZipRerank focuses on reducing this second-stage cost by redesigning both training and inference to enable efficient listwise scoring without sacrificing ranking quality.

\paragraph{Multimodal Information Retrieval (MMIR)}
With the rise of multimodal applications such as M-RAG~\cite{gao2025scaling, you2026knowledge, you2026cut, dai2026mg}, multimodal search~\cite{zhou2023state, jiang2024mmsearch}, and VQA \cite{antol2015vqa, lerner2024cross, kim2025visual}, there has been increasing interest in retrieval over visually rich documents.
Benchmarks such as MMDocIR \cite{dong-etal-2025-mmdocir} capture realistic challenges, including long documents, diverse domains, and queries requiring cross-modal reasoning across multiple pages.
ZipRerank is designed specifically for this setting, targeting the scalability and efficiency challenges posed by long multimodal documents while maintaining strong cross-modal reasoning ability.

\paragraph{MMIR Retrievers}
Multimodal retrievers can be broadly categorized into single-vector and multi-vector approaches.
Single-vector retrievers, such as DSE~\cite{ma-etal-2024-unifying}, encode each document into a single embedding for efficient retrieval, while multi-vector (or late-interaction) methods, such as ColQwen~\cite{faysse2025colpali}, use multiple representations to achieve higher accuracy at increased computational cost.
Recent work, such as Light-ColPali~\cite{ma-etal-2025-towards-storage}, explores token reduction techniques to improve efficiency.
These approaches primarily optimize the first-stage retrieval step, whereas ZipRerank complements them by focusing on efficient second-stage reranking, which remains a major bottleneck in multimodal pipelines.

\paragraph{MMIR Rerankers}
Multimodal rerankers typically adopt either pointwise or listwise designs.
Pointwise methods \cite{chen2024mllm, mortaheb2025re, docrerank2025} score each query-document pair independently, often achieving strong accuracy but incurring high latency due to repeated forward passes.
Recent multimodal rerankers such as LamRA-Rank~\cite{liu2025lamra} and UniME-V2-Reranker~\cite{gu2026unime} further extend this direction by adapting large multimodal models for general retrieval and reranking tasks, with support for pointwise or listwise ranking formulations.
Listwise rerankers \cite{xu2025mm, liu2025lamra, gu2026unime} jointly process multiple candidates and generate or score an ordered list, offering a more efficient formulation than pointwise scoring.
Among them, MM-R5~\cite{xu2025mm} demonstrates strong performance by incorporating chain-of-thought reasoning.
However, its reliance on autoregressive generation over long multimodal inputs results in substantial inference latency.
ZipRerank differs from prior rerankers by explicitly co-designing training and inference for efficiency: it eliminates autoregressive decoding via single-logit scoring and reduces input length through query-aware token pruning, enabling single-pass reranking with competitive accuracy.

\section{Preliminaries}
\label{sec:prelim}

\subsection{Problem Formulation}
\label{sec:prelim:problem}

Let $\bm{q} \in \mathcal{Q}$ be a text query (question or instruction). 
A first-stage retriever returns an \emph{ordered} list of $k$ candidate long-document pages, each rendered as an image:
$\bm{I} = (I_1, \cdots, I_k) \in \mathcal{I}^k$, where each $I_i$ corresponds to a full page (e.g., a PDF page or webpage screenshot), and the order of $\bm{I}$ is the retriever-provided ranking. The reranker is a parameterized multimodal model
$R_\theta:\ \mathcal{Q}\times \mathcal{I}^k \rightarrow \mathcal{S}_k$, which maps $(\bm{q},\bm{I})$ to a permutation $\pi \in \mathcal{S}_k$ over $\{1,\cdots,k\}$. The output reranked list is
$\hat{\bm{I}} = (I_{\pi(1)},\cdots,I_{\pi(k)})$.

\subsection{Latency Sources}
\label{sec:prelim:latency}

In listwise reranking, inference latency is typically dominated by the reranker's Transformer computation rather than first-stage retrieval. Two factors drive the cost. First, the input context can be very long because the prompt concatenates the query with visual tokens from multiple candidate pages, often resulting in thousands of tokens; attention over such long contexts becomes a major bottleneck. Second, many rerankers output an ordered list autoregressively, requiring multiple sequential decoding steps. Even with KV caching, each step must attend over the same long context, so generation further amplifies latency. Our inference optimizations address both sources by shortening the effective visual-token sequence and by avoiding multi-step decoding via single-pass scoring.

For a decoder-only Transformer with context length $n$ and generated length $u$, inference cost can be approximated as: 
\vspace{-0.25em}
\begin{equation}
\label{eq:compute_model_main}
  F(n,u)\approx L\!\left(c_{\mathrm{att}} d n^2 + c_{\mathrm{ffn}} d^2 n\right) + u L d n \cdot c_{\mathrm{dec}}, 
  \vspace{-0.25em}
\end{equation}
where the first term is the prefill cost and the second term is KV-cached decoding.
In multimodal listwise reranking, $n$ is dominated by visual tokens from $k$ page images and $u$ typically grows with $k$
(and possibly reasoning). ZipRerank reduces $n$ via query-aware visual-token pruning and sets $u=1$ via single-step scoring (Appendix~\ref{app:compute}).

\section{The ZipRerank Framework}
\label{sec:method}

We present \textbf{ZipRerank}, a framework for training highly efficient listwise multimodal rerankers tailored to long-document retrieval (Figure \ref{fig:overview}).

\begin{figure*}[ht]
  \centering
  \includegraphics[width=0.99\textwidth]{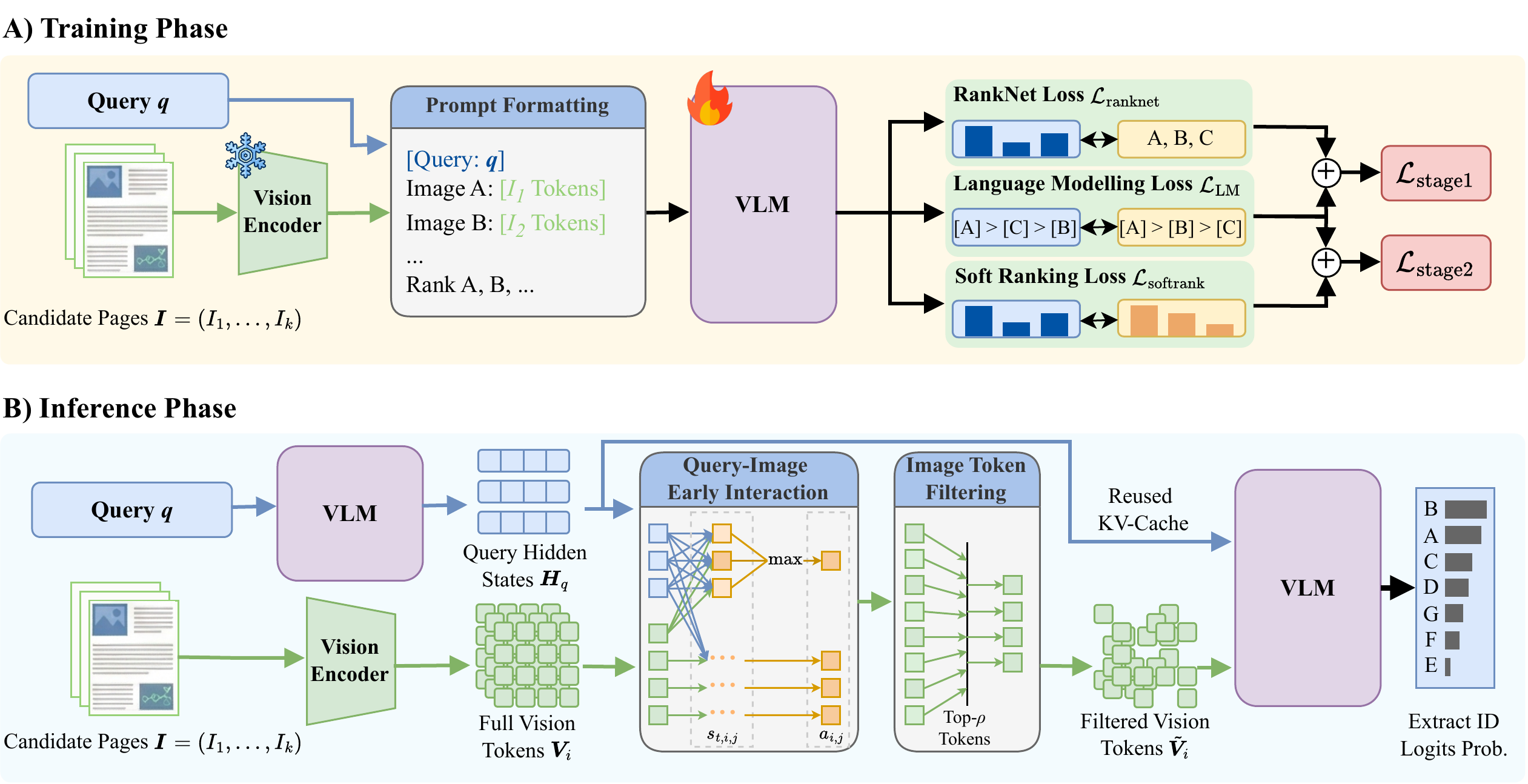}
  \caption{\textbf{Overview of the \textbf{ZipRerank} framework.} 
  ZipRerank integrates a two-stage training pipeline: (i) listwise pretraining on large-scale text data rendered as images, followed by (ii) vision-centric finetuning with VLM-teacher-distilled soft supervision. The efficient inference design combines query-aware visual token pruning and single-token listwise scoring, enabling end-to-end reranking in a single LLM forward pass.}
  \label{fig:overview}
  \vspace{-1.0em}
\end{figure*}

\subsection{Training Phase}
\label{sec:method:training}

We train ZipRerank using an instruction-style formulation (see Figure~\ref{fig:prompt}).
Each training example contains a text query $\bm{q}$ and a list of $m$ candidate document pages rendered as images. 
Each candidate is assigned a unique single-token identifier (e.g., \texttt{A}, \texttt{B}, $\cdots$), and the target output is an ordered sequence of these identifiers in descending relevance. 
Training proceeds in two stages: Stage~1 pretrains general listwise ranking behavior using fully supervised rankings, and Stage~2 finetunes for vision-document reranking using teacher-distilled soft supervision.

\subsubsection{Stage 1: General Reranking Pretraining}
\label{sec:method:training:stage-1}

In Stage~1, we leverage a large-scale \emph{text-only} reranking corpus and render text passages into page-like images to pretrain general reranking competence. Since these datasets provide full ranking supervision, we optimize: 
\vspace{-0.25em}
\begin{equation*}
  \mathcal{L}_{\text{stage1}} = \mathcal{L}_{\text{LM}} + \lambda_1 \cdot \mathcal{L}_{\text{ranknet}},
  \vspace{-0.25em}
\end{equation*}
where $\lambda_1$ balances the two terms.

\paragraph{Learning-to-Rank Loss}
We adopt a weighted RankNet objective \cite{burges2005learning, gangi-reddy-etal-2024-first, chen2024early}. Let $t_i$ denote the identifier token for candidate $i$, and let $s_i$ be the model's output-vocabulary logit for $t_i$ at the \emph{first} decoding step. Given a target ranking over the $m$ candidates, we define:
\vspace{-0.25em}
\begin{equation*}
  \mathcal{L}_{\text{ranknet}} = \sum_{r_i < r_j} w_{i,j} \cdot \log(1 + \exp(s_j - s_i)),
  \vspace{-0.25em}
\end{equation*}
where $w_{i,j}=\frac{1}{r_i + r_j}$, and $r_i$ denotes the target rank of candidate $i$ (smaller indicates higher relevance).
This loss directly supervises the identifier logits at the first decoding step, which are later used for single-step scoring at inference.

\paragraph{Language Modeling Loss}
Let $y=(y_1,\dots,y_{|y|})$ be the target identifier sequence (including any formatting tokens) and let $x$ be the formatted multimodal input. We minimize the negative log-likelihood:
\vspace{-0.25em}
\begin{equation*}
  \mathcal{L}_{\text{LM}} = -\sum_{\ell=1}^{|y|} \log p_{\theta}(y_\ell \mid x, y_{<\ell}).
  \vspace{-0.25em}
\end{equation*}

\subsubsection{Stage 2: Vision Reranking Finetuning}
\label{sec:method:training:stage-2}

In Stage~2, we finetune on multimodal retrieval datasets (e.g., VQA-style data) that typically provide only a single ground-truth positive. To obtain supervision over the full candidate list, we use a stronger VLM teacher (e.g., OpenAI's GPT-5) to produce an auxiliary ordering over candidates and treat it as a soft label. We optimize
\vspace{-0.25em}
\begin{equation*}
  \mathcal{L}_{\text{stage2}} \;=\; \mathcal{L}_{\text{LM}} + \lambda_2 \cdot \mathcal{L}_{\text{softrank}},
  \vspace{-0.25em}
\end{equation*}
where $\lambda_2$ balances the two terms.

\paragraph{Soft Ranking Loss}
We use a listwise cross-entropy over the first-step identifier distribution, with a \emph{soft} target derived from the teacher ranking. 
Let $p_i = \tfrac{\exp(s_i)}{\sum_{j=1}^{m}\exp(s_j)}$, and let $\pi(k)$ be the candidate at position $k$ in the teacher order ($k\in\{0,\dots,m-1\}$). We define the target distribution:
\vspace{-0.25em}
\begin{equation*}
  q_{\pi(k)} = \frac{\gamma^{k}}{\sum_{\ell=0}^{m-1}\gamma^{\ell}}, \quad \gamma\in(0,1)
  \vspace{-0.25em}
\end{equation*}
and compute:
\vspace{-0.25em}
\begin{equation*}
  \mathcal{L}_{\text{softrank}} = -\sum_{i=1}^{m} q_i \log p_i.
  \vspace{-0.25em}
\end{equation*}
The geometric decay is motivated by Rank-Biased Precision (RBP) \cite{moffat2008rank}, which assumes a fixed continuation probability down the ranked list and therefore assigns exponentially decreasing importance to lower-ranked results. Unlike the pairwise loss in Stage~1, this objective accommodates uncertainty in the teacher ordering: it anchors learning on the ground-truth item while assigning graded credit to other high-ranked candidates, improving robustness when multiple candidates may be plausibly relevant.

\subsection{Inference Phase}
\label{sec:method:inference}

At inference time, we target two dominant latency sources in multimodal reranking over long documents: (i) the long input sequence induced by high-resolution page images and listwise concatenation, and (ii) the multiple forward passes required by autoregressive generation. We address (i) by pruning image tokens via query--image early interaction, and address (ii) by adapting single-token decoding \cite{gangi-reddy-etal-2024-first, chen2024early} to produce the final ranking with a single LLM forward pass.

\paragraph{Query-Image Early Interaction}
Motivated by recent work on efficient VLMs~\cite{zhangsparsevlm, han2025adafv, song-etal-2025-less}, we introduce a simple, lightweight query--image early interaction mechanism that filters visual tokens in a query-aware manner, retaining only the most relevant image tokens for reranking.
Let the query $\bm{q}$ be tokenized into $N_q$ text tokens. We run the LLM on the prompt prefix up to (but excluding) the first image token, and extract only the hidden states at the positions corresponding to the query tokens:
\vspace{-0.25em}
\begin{equation}
\label{eq:query_hidden}
  \bm{H}_q \in \mathbb{R}^{N_q \times D},~\bm{H}_q = (\bm{h}_1,\cdots,\bm{h}_{N_q}).
  \vspace{-0.25em}
\end{equation}
For each candidate page image $i\in\{1,\cdots,k\}$, let $\bm{V}_i=(\bm{v}_{i,1},\cdots,\bm{v}_{i,N_i})\in\mathbb{R}^{N_i\times D}$ be its pre-computed visual token embeddings, where $N_i$ is the number of tokens for image $i$, which can vary across images. We define a text-to-image importance score for each visual token by its maximum cosine similarity to any query token:
\vspace{-0.25em}
\begin{equation}
  \label{eq:t2i_maxcos}
  a_{i,j} = \max_{1\le t\le N_q} s_{t,i,j},
  \vspace{-0.25em}
\end{equation}
where $s_{t,i,j} = \cos(\bm{h}_t, \bm{v}_{i,j}) = \tfrac{\bm{h}_t^\top \bm{v}_{i,j}}{\lVert \bm{h}_t\rVert\,\lVert \bm{v}_{i,j}\rVert}$ and $j \in \{1,\cdots,N_i\}$.
Given a keep ratio $\rho\in(0,1]$, we retain the top $round(\rho N_i)$ visual tokens per image:
\vspace{-0.25em}
\begin{equation}
\label{eq:topk_prune}
  \mathcal{J}_i = \mathrm{Top}_{round(\rho N_i)}(\{a_{i,j}\}_{j=1}^{N_i}),~\tilde{\bm{V}}_i = \{\bm{v}_{i,j}\}_{j\in \mathcal{J}_i},
  \vspace{-0.25em}
\end{equation}
and construct the final multimodal input using $\{\tilde{\bm{V}}_1,\cdots,\tilde{\bm{V}}_k\}$ in place of the full visual token sequences. 
We keep the original RoPE positional embedding \cite{su2024roformer} of the image patch. We reuse the KV cache computed for the prefix when running the LLM on the final filtered sequence. As a result, extracting the query-token embeddings from the prefix introduces no additional forward-pass compute.

This pruning step can be viewed as approximately preserving the attention output when the discarded tokens carry a small total attention mass. Moreover, our max-sim pruning score $a_{i,j}$ is a tight surrogate for a smooth attention-style pooling score over query tokens (Appendix~\ref{app:early_interaction}). In particular, if an attention layer places tail mass $\varepsilon$ on the pruned tokens (i.e., the sum of their attention weights), and the corresponding value vectors are bounded, then the change in the attention output due to pruning and renormalization is bounded by $O(\varepsilon)$ (Appendix~\ref{app:pruning_bound}).

\paragraph{Single-Token Decoding}
To avoid iterative autoregressive generation, we perform ranking in one decoding step. Specifically, each candidate image $i$ is associated with a unique single-token identifier $t_i$ (e.g., \texttt{A}, \texttt{B}, \dots) in the prompt. Given the formatted multimodal input $x(\bm{q},\tilde{\bm{V}})$, we run the LLM once to obtain the next-token logits $\bm{z} \in \mathbb{R}^{|\mathcal{V}|}$ and extract the logits of the identifier tokens $\{z_{t_i}\}_{i=1}^k$. The final reranked order is the permutation
$\pi= \mathrm{argsort}_\downarrow\!\big(z_{t_1},\cdots,z_{t_k}\big)$, which yields the output list $(I_{\pi(1)},\cdots,I_{\pi(k)})$ in a single forward pass.

\section{Experiments}
\label{sec:expt}

\begin{table*}[!t]
\centering
\small
\setlength{\tabcolsep}{5pt}
\renewcommand{\arraystretch}{1.2}
\caption{Main results for page-level retrieval and reranking with first-stage retriever DSE$_{\mathrm{wiki-ss}}$ \cite{ma-etal-2024-unifying}.}
\label{tab:main_page_recall_dse_wikiss}
\resizebox{0.98\linewidth}{!}{%
\begin{tabular}{l l *{13}{c}}
\toprule
\rowcolor[HTML]{FFF2CC}
\textbf{} &
\textbf{Method} &
\textbf{Res.} &
\textbf{Adm.} &
\textbf{Tut.} &
\textbf{Aca.} &
\textbf{Bro.} &
\textbf{Fin.} &
\textbf{Guide} &
\textbf{Gov.} &
\textbf{Laws} &
\textbf{News} &
\textbf{Macro} &
\textbf{Micro} &
\textbf{Time (s)} \\
\midrule
\rowcolor[HTML]{D9EAD3} 
\multicolumn{15}{c}{\textbf{Recall@$1$}} \\
\midrule
& \textbf{DSE$_{\mathrm{wiki-ss}}$} &
46.5 & 42.4 & 52.1 & 47.8 & 43.9 & 39.3 & 50.3 & 48.9 & 49.6 & 39.4 & 46.0 & 45.5 & -- \\
\midrule
\multirow{4}{*}{\rotatebox[origin=c]{90}{\textbf{VLM}}}
& \textbf{Llama-3.2-11B-Vision} & 0.6 & 0.9 & 0.0 & 4.8 & 0.0 & 0.2 & 2.0 & 0.0 & 1.5 & 0.0 & 1.0 & 1.5 & 5.51 \\
& \textbf{Qwen3-VL-8B-Instruct}   &
30.1 & 8.5 & 31.5 & 36.6 & 21.7 & 16.6 & 31.4 & 36.0 & 35.2 & 0.7 & 24.8 & 26.3 & 2.71 \\
& \textbf{GPT-5-nano}    & 60.6 & 57.6 & 65.4 & 63.9 & 60.6 & 49.7 & 61.3 & 67.0 & 73.1 & 38.7 & 59.8 & 59.0 & 27.61 \\
& \textbf{GPT-5-mini}    & 65.1 & 74.0 & 66.2 & 75.5 & 69.0 & 59.4 & 67.9 & 76.0 & 82.2 & 65.0 & 70.0 & 69.2 & 23.38 \\
\midrule
\multirow{5}{*}{\rotatebox[origin=c]{90}{\textbf{Reranker}}}

& \textbf{UniME (Listwise)} &
55.3 & 46.9 & 64.3 & 43.8 & 46.8 & 43.0 & 61.1 & 55.5 & 60.6 & 35.0 & 51.2 & 49.1 & 0.24 \\
& \textbf{LamRA (Listwise)} &
61.3 & 52.8 & 59.0 & 67.2 & 63.4 & 56.5 & 65.8 & 66.3 & 72.3 & 69.3 & 63.4 & 63.6 & 0.53 \\
& \textbf{MM-R5} &
64.1 & 70.7 & 64.3 & 68.0 & 58.8 & 56.1 & 66.0 & 67.9 & 78.4 & 67.2 & 66.1 & 65.1 & 3.82 \\
& \textbf{ZipRerank} &
61.6 & 65.3 & 64.1 & 67.6 & 65.7 & 56.1 & 70.6 & 68.8 & 76.1 & 46.0 & 64.2 & 63.3 & 0.36 \\
& \textbf{ZipRerank}$_\mathrm{-50\%}$ &
64.0 & 69.9 & 63.3 & 68.5 & 65.7 & 54.3 & 63.6 & 67.9 & 72.3 & 43.1 & 63.3 & 62.4 & 0.30 \\
\midrule
\rowcolor[HTML]{D9EAD3}
\multicolumn{15}{c}{\textbf{Recall@$3$}} \\ 
\midrule
 & \textbf{DSE$_{\mathrm{wiki-ss}}$} &
72.7 & 63.9 & 77.1 & 78.7 & 63.9 & 61.3 & 73.7 & 71.3 & 81.4 & 51.1 & 69.5 & 70.2 & -- \\
\midrule
\multirow{4}{*}{\rotatebox[origin=c]{90}{\textbf{VLM}}}
& \textbf{Llama-3.2-11B-Vision} & 7.9 & 25.9 & 9.1 & 53.1 & 6.6 & 10.3 & 8.4 & 9.5 & 21.6 & 3.6 & 15.6 & 20.5 & 5.51 \\
& \textbf{Qwen3-VL-8B-Instruct}   &
75.0 & 58.9 & 76.0 & 77.8 & 69.8 & 65.4 & 76.7 & 74.9 & 86.0 & 32.1 & 69.3 & 70.4 & 2.71 \\
& \textbf{GPT-5-nano}    & 82.9 & 86.9 & 84.2 & 84.3 & 77.2 & 72.9 & 83.7 & 81.3 & 93.9 & 48.9 & 79.6 & 79.1 & 27.61 \\
& \textbf{GPT-5-mini}    & 89.1 & 95.3 & 85.5 & 94.0 & 87.3 & 85.7 & 89.1 & 89.4 & 96.2 & 68.6 & 88.0 & 88.3 & 23.38 \\
\midrule
\multirow{5}{*}{\rotatebox[origin=c]{90}{\textbf{Reranker}}}
& \textbf{UniME (Listwise)} &
74.1 & 63.9 & 79.2 & 79.0 & 66.2 & 62.7 & 77.8 & 73.1 & 81.4 & 51.8 & 70.9 & 71.4 & 0.24 \\
& \textbf{LamRA (Listwise)} &
76.4 & 73.9 & 77.2 & 85.6 & 77.2 & 68.9 & 82.2 & 76.7 & 86.0 & 71.5 & 77.6 & 77.8 & 0.53 \\
& \textbf{MM-R5} &
78.2 & 85.0 & 80.6 & 86.6 & 71.6 & 69.4 & 80.2 & 77.6 & 89.8 & 72.3 & 79.1 & 79.0 & 3.82 \\
& \textbf{ZipRerank} &
86.7 & 89.3 & 88.0 & 90.2 & 87.5 & 79.2 & 87.3 & 87.8 & 95.1 & 56.9 & 84.8 & 84.5 & 0.36 \\
& \textbf{ZipRerank}$_\mathrm{-50\%}$ &
84.3 & 88.3 & 85.7 & 90.7 & 84.4 & 77.6 & 86.7 & 88.5 & 91.3 & 56.9 & 83.4 & 83.4 & 0.30 \\
\midrule
\rowcolor[HTML]{D9EAD3}
\multicolumn{15}{c}{\textbf{Recall@$5$}} \\
\midrule
 & \textbf{DSE$_{\mathrm{wiki-ss}}$} &
80.0 & 77.3 & 79.5 & 87.3 & 70.7 & 72.1 & 78.9 & 80.6 & 88.3 & 56.2 & 77.1 & 78.2 & -- \\
\midrule
\multirow{4}{*}{\rotatebox[origin=c]{90}{\textbf{VLM}}}
& \textbf{Llama-3.2-11B-Vision} & 24.6 & 51.4 & 44.6 & 76.7 & 40.6 & 38.3 & 33.7 & 23.9 & 40.5 & 8.0 & 38.2 & 43.0 & 5.51 \\
& \textbf{Qwen3-VL-8B-Instruct}   &
82.2 & 71.9 & 81.1 & 85.7 & 77.6 & 73.8 & 84.9 & 81.5 & 89.8 & 39.4 & 76.8 & 77.9 & 2.71 \\
& \textbf{GPT-5-nano}    & 87.0 & 91.0 & 87.8 & 91.5 & 81.2 & 79.5 & 89.1 & 86.0 & 94.7 & 57.7 & 84.5 & 84.7 & 27.61 \\
& \textbf{GPT-5-mini}    & 92.3 & 97.7 & 89.7 & 97.4 & 89.3 & 88.3 & 93.4 & 93.2 & 97.0 & 70.8 & 90.9 & 91.3 & 23.38 \\
\midrule
\multirow{5}{*}{\rotatebox[origin=c]{90}{\textbf{Reranker}}}

& \textbf{UniME (Listwise)} &
80.4 & 77.3 & 81.4 & 87.3 & 72.4 & 72.4 & 80.8 & 82.4 & 88.3 & 56.9 & 78.0 & 78.8 & 0.24 \\
& \textbf{LamRA (Listwise)} &
82.0 & 85.0 & 80.0 & 91.9 & 80.0 & 76.6 & 84.1 & 83.3 & 91.3 & 73.0 & 82.7 & 83.3 & 0.53 \\
& \textbf{MM-R5} &
84.8 & 92.5 & 82.9 & 91.4 & 75.1 & 76.4 & 82.8 & 86.9 & 91.3 & 73.7 & 83.8 & 83.9 & 3.82 \\
& \textbf{ZipRerank} &
91.7 & 93.8 & 90.8 & 96.1 & 89.4 & 86.1 & 92.8 & 92.3 & 96.2 & 61.3 & 89.0 & 89.4 & 0.36 \\
& \textbf{ZipRerank}$_\mathrm{-50\%}$ &
90.9 & 92.4 & 91.2 & 95.6 & 89.7 & 84.9 & 91.5 & 91.4 & 93.2 & 61.3 & 88.2 & 88.6 & 0.30 \\
\bottomrule
\end{tabular}}
\end{table*}
\setlength{\textfloatsep}{1.0em}

\begin{table*}[!t]
\centering
\small
\setlength{\tabcolsep}{5pt}
\renewcommand{\arraystretch}{1.3}
\caption{Main results for page-level retrieval and reranking with first-stage retriever ColQwen \cite{faysse2025colpali}.} 
\label{tab:main_page_recall_colqwen}
\resizebox{0.98\linewidth}{!}{%
\begin{tabular}{ll*{13}{r}}
\toprule
\rowcolor[HTML]{FFF2CC}
\textbf{} &
\textbf{Method} &
\textbf{Res.} &
\textbf{Adm.} &
\textbf{Tut.} &
\textbf{Aca.} &
\textbf{Bro.} &
\textbf{Fin.} &
\textbf{Guide} &
\textbf{Gov.} &
\textbf{Laws} &
\textbf{News} &
\textbf{Macro} &
\textbf{Micro} &
\textbf{Time (s)} \\
\midrule
\rowcolor[HTML]{D9EAD3}
\multicolumn{15}{c}{\textbf{Recall@$1$}} \\
\midrule
 & \textbf{ColQwen} &
56.2 & 55.3 & 60.8 & 65.4 & 53.8 & 48.1 & 60.4 & 66.1 & 72.3 & 65.7 & 60.4 & 59.9 & -- \\
\midrule
\multirow{4}{*}{\rotatebox[origin=c]{90}{\textbf{VLM}}}
& \textbf{Llama-3.2-11B-Vision} & 0.0 & 0.9 & 0.0 & 3.0 & 0.0 & 0.7 & 1.3 & 0.9 & 0.8 & 1.5 & 0.9 & 1.2 & 5.51 \\
& \textbf{Qwen3-VL-8B-Instruct}   & 35.2 & 16.4 & 41.6 & 35.9 & 28.7 & 18.9 & 37.2 & 43.5 & 38.6 & 1.5 & 29.8 & 29.6 & 2.47 \\
& \textbf{GPT-5-nano}    & 62.4 & 60.6 & 64.1 & 66.3 & 60.7 & 51.1 & 67.1 & 70.6 & 79.9 & 53.3 & 63.6 & 62.5 & 27.32 \\
& \textbf{GPT-5-mini}    & 64.8 & 66.6 & 67.2 & 74.3 & 71.3 & 60.9 & 70.9 & 76.0 & 85.2 & 71.5 & 70.9 & 70.1 & 24.70 \\
\midrule
\multirow{5}{*}{\rotatebox[origin=c]{90}{\textbf{Reranker}}}
& \textbf{UniME (Listwise)} &
58.0 & 56.7 & 63.8 & 56.4 & 51.2 & 47.8 & 60.9 & 61.6 & 75.4 & 61.3 & 59.3 & 57.6 & 0.25 \\
& \textbf{LamRA (Listwise)} &
60.7 & 56.8 & 61.7 & 71.0 & 61.1 & 57.1 & 69.0 & 70.6 & 76.1 & 70.1 & 65.4 & 65.6 & 0.55 \\
& \textbf{MM-R5} &
65.3 & 68.3 & 64.9 & 69.1 & 64.7 & 58.1 & 66.0 & 70.6 & 84.5 & 73.7 & 68.5 & 67.4 & 3.74 \\
& \textbf{ZipRerank} &
66.3 & 67.1 & 65.7 & 70.1 & 70.3 & 56.2 & 71.5 & 72.4 & 79.2 & 42.3 & 66.1 & 65.1 & 0.36 \\
& \textbf{ZipRerank}$_\mathrm{-50\%}$ &
62.4 & 64.4 & 65.2 & 69.1 & 68.3 & 54.0 & 65.8 & 70.6 & 75.4 & 43.8 & 63.9 & 63.0 & 0.31 \\
\midrule
\rowcolor[HTML]{D9EAD3}
\multicolumn{15}{c}{\textbf{Recall@$3$}} \\
\midrule
 & \textbf{ColQwen} &
79.4 & 85.4 & 76.7 & 85.9 & 70.3 & 64.7 & 77.3 & 84.9 & 93.2 & 72.3 & 79.0 & 78.2 & -- \\
\midrule
\multirow{4}{*}{\rotatebox[origin=c]{90}{\textbf{VLM}}}
& \textbf{Llama-3.2-11B-Vision} & 7.5 & 35.2 & 5.5 & 52.7 & 9.9 & 9.1 & 4.1 & 9.5 & 14.0 & 4.4 & 15.2 & 19.5 & 5.51 \\
& \textbf{Qwen3-VL-8B-Instruct}   & 77.0 & 79.0 & 77.8 & 77.1 & 70.1 & 65.3 & 82.9 & 79.7 & 93.2 & 36.5 & 73.9 & 72.9 & 2.47 \\
& \textbf{GPT-5-nano}    & 84.5 & 85.8 & 81.6 & 85.7 & 77.7 & 70.5 & 89.3 & 85.6 & 94.7 & 64.2 & 82.0 & 81.0 & 27.32 \\
& \textbf{GPT-5-mini}    & 89.1 & 92.1 & 87.6 & 92.4 & 89.5 & 79.7 & 90.6 & 89.4 & 97.7 & 78.1 & 88.6 & 87.8 & 24.70 \\
\midrule
\multirow{5}{*}{\rotatebox[origin=c]{90}{\textbf{Reranker}}}
& \textbf{UniME (Listwise)} &
79.9 & 85.4 & 79.2 & 86.2 & 71.6 & 65.5 & 79.4 & 84.9 & 93.2 & 72.3 & 79.8 & 78.9 & 0.25 \\
& \textbf{LamRA (Listwise)} &
81.1 & 84.6 & 77.6 & 87.6 & 77.5 & 69.9 & 83.3 & 86.7 & 95.5 & 76.6 & 82.0 & 81.3 & 0.55 \\
& \textbf{MM-R5} &
81.6 & 88.1 & 80.8 & 88.6 & 77.5 & 70.2 & 80.2 & 88.5 & 95.5 & 78.8 & 83.0 & 82.1 & 3.74 \\
& \textbf{ZipRerank} &
87.9 & 86.9 & 86.9 & 90.3 & 86.8 & 77.0 & 89.3 & 90.5 & 95.5 & 57.7 & 84.9 & 84.4 & 0.36 \\
& \textbf{ZipRerank}$_\mathrm{-50\%}$ &
84.8 & 88.2 & 82.6 & 89.3 & 87.8 & 75.8 & 87.1 & 90.5 & 92.4 & 59.1 & 83.8 & 83.1 & 0.31 \\
\midrule
\rowcolor[HTML]{D9EAD3}
\multicolumn{15}{c}{\textbf{Recall@$5$}} \\
\midrule
 & \textbf{ColQwen} &
85.9 & 92.7 & 81.2 & 92.5 & 75.8 & 69.7 & 82.7 & 89.6 & 95.5 & 75.9 & 84.1 & 83.5 & -- \\
\midrule
\multirow{4}{*}{\rotatebox[origin=c]{90}{\textbf{VLM}}}
& \textbf{Llama-3.2-11B-Vision} & 23.6 & 65.2 & 42.6 & 81.8 & 39.8 & 40.9 & 28.5 & 20.3 & 43.2 & 5.8 & 39.2 & 44.4 & 5.51 \\
& \textbf{Qwen3-VL-8B-Instruct}   & 86.1 & 89.6 & 84.2 & 86.5 & 77.6 & 72.3 & 86.2 & 88.7 & 96.2 & 46.0 & 81.3 & 80.6 & 2.47 \\
& \textbf{GPT-5-nano}    & 88.0 & 94.0 & 86.7 & 91.8 & 82.5 & 75.3 & 91.8 & 90.5 & 97.7 & 71.5 & 87.0 & 86.0 & 27.32 \\
& \textbf{GPT-5-mini}    & 92.9 & 95.7 & 90.5 & 96.2 & 92.3 & 84.4 & 94.5 & 93.0 & 98.5 & 81.0 & 91.9 & 91.4 & 24.70 \\
\midrule
\multirow{5}{*}{\rotatebox[origin=c]{90}{\textbf{Reranker}}}
& \textbf{UniME (Listwise)} &
85.9 & 92.7 & 83.1 & 92.5 & 77.1 & 70.3 & 84.4 & 89.6 & 95.5 & 75.9 & 84.7 & 83.9 & 0.25 \\
& \textbf{LamRA (Listwise)} &
86.8 & 94.9 & 82.3 & 93.7 & 80.6 & 74.4 & 87.7 & 91.4 & 95.5 & 79.6 & 86.7 & 86.0 & 0.55 \\
& \textbf{MM-R5} &
88.4 & 93.6 & 83.3 & 93.8 & 80.4 & 74.1 & 85.3 & 91.4 & 95.5 & 80.3 & 86.6 & 86.1 & 3.74 \\
& \textbf{ZipRerank} &
92.6 & 90.0 & 89.9 & 95.1 & 91.3 & 81.8 & 94.1 & 93.2 & 97.0 & 67.9 & 89.3 & 89.1 & 0.36 \\
& \textbf{ZipRerank}$_\mathrm{-50\%}$ &
92.8 & 90.8 & 88.4 & 94.1 & 90.7 & 81.4 & 90.3 & 92.3 & 94.7 & 67.2 & 88.3 & 88.1 & 0.31 \\
\bottomrule
\end{tabular}
}
\end{table*}

\subsection{Experimental Setup}
\label{sec:expt:setup}

\subsubsection{Datasets}
\label{sec:expt:setup:datasets}

\paragraph{Training}
We finetune our models from Qwen3-VL-8B using two datasets. Stage~1 uses RankZephyr \cite{pradeep2023rankzephyr}, a large-scale text-passage reranking dataset distilled from GPT-4 rankings. We render each passage into a $280\times280$ image, with the font size dynamically adjusted to maximize text coverage within the canvas. 
Stage~2 finetunes on the MMDocIR training set \cite{dong-etal-2025-mmdocir}.

\paragraph{Benchmarking}
We evaluate on the page-level retrieval task of the MMDocIR benchmark \cite{dong-etal-2025-mmdocir}. The evaluation set comprises 313 long documents spanning 10 diverse domains, with an average length of 65.1 pages, and 1,658 expert-curated queries. Following MM-R5 \cite{xu2025mm}, we retrieve the top-20 candidate pages with a first-stage retriever and then rerank them in a second stage.

\subsubsection{Metrics}
\label{sec:expt:setup:metrics}

\paragraph{Recall@$k$}
Following standard practice in prior work \cite{dong-etal-2025-mmdocir, xu2025mm}, we use Recall@k as the primary evaluation metric. Similar to MM-R5, our reranker assigns a relevance score to each \emph{page} in the document and returns the top-$k$ pages with the highest scores. Let $\mathcal{G}_{\bm{q}}$ denote the set of ground-truth relevant pages for query $\bm{q}$, and let $\mathcal{R}_{\bm{q}}^{(k)}$ be the set of top-$k$ reranked pages. We compute
\vspace{-0.25em}
\begin{equation*}
  \mathrm{Recall@}k(\bm{q}) = \frac{|\mathcal{G}_{\bm{q}} \cap \mathcal{R}_{\bm{q}}^{(k)}|}{|\mathcal{G}_{\bm{q}}|},
  \vspace{-0.25em}
\end{equation*}
which measures the fraction of ground-truth evidence pages retrieved within the top-$k$ results. When aggregating across datasets/subsets, we report both micro and macro Recall@k: micro averages over all queries, while macro averages within each dataset/subset, and then averages across subsets.

\paragraph{LLM Wall-Clock Time}
As an auxiliary efficiency metric in the main result tables, we report cached LLM reranking time, excluding vision encoding and other preprocessing costs. This metric isolates the cost of the LLM reranking step once visual embeddings are available, and is useful for comparing the decoding and scoring efficiency of different rerankers. For API-based models, we report API wall-clock time. To complement this cached metric, we further provide an end-to-end efficiency analysis in Appendix~\ref{sec:efficiency_analysis}, including vision encoding, query-aware filtering, LLM time, throughput, FLOPs, and peak GPU memory.

\subsubsection{Models}
\label{sec:expt:setup:models}

We consider two first-stage retrievers: DSE$_{\mathrm{wiki-ss}}$~\cite{ma-etal-2024-unifying}, a single-vector retriever, and ColQwen~\cite{faysse2025colpali}, a multi-vector late-interaction retriever. For VLM-based listwise reranking baselines, we compare against \texttt{Llama-3.2-11B-Vision}~\footnote{\url{https://huggingface.co/meta-llama/Llama-3.2-11B-Vision}}, \texttt{Qwen3-VL-8B-Instruct}~\cite{Qwen3-VL}, \texttt{GPT-5-nano}, and \texttt{GPT-5-mini} via the official API. We also include recent listwise multimodal rerankers, including MM-R5~\cite{xu2025mm}, LamRA~\cite{liu2025lamra}, and UniME~\cite{gu2026unime}. 

We finetune ZipRerank from the \texttt{Qwen3-VL-8B-Instruct} checkpoint. 
Additional details, including hyperparameters, training setup, and checkpoints, are provided in Appendix~\ref{app:implementation}.

\subsection{Main Results}
\label{sec:expt:results}

Tables~\ref{tab:main_page_recall_dse_wikiss} and~\ref{tab:main_page_recall_colqwen} present reranking results on top-20 candidates retrieved by DSE$_{\mathrm{wiki-ss}}$ and ColQwen, respectively. 
ZipRerank consistently improves upon both first-stage retrievers across all values of $k$, demonstrating strong reranking effectiveness with less than 0.4 s cached LLM reranking time.

Compared to MM-R5, our model achieves competitive performance with significantly lower latency and computational cost. 
On both DSE$_{\mathrm{wiki-ss}}$ and ColQwen inputs, ZipRerank achieves higher Recall@$3$ and Recall@$5$, while slightly underperforming MM-R5 on Recall@$1$. This reflects a trade-off between speed and top-1 accuracy: MM-R5 explicitly generates reasoning chains to justify the top result, benefiting Recall@$1$ but incurring substantial autoregressive overhead.

Compared to zero-shot VLM-based reranking~\cite{sun2023chatgpt}, we find that relatively smaller models such as Llama-3.2-11B-Vision and Qwen3-VL-8B-Instruct do not consistently improve over the first-stage retriever. This suggests that effective listwise reranking is challenging for smaller VLMs, which must both follow the ranking instruction and jointly reason over up to 20 page images. In contrast, a stronger VLM such as GPT-5-mini performs substantially better, motivating our choice to use a capable teacher model to produce soft labels for Stage~2 training. At the same time, such large VLMs are impractically slow for deployment; even via API, they can take over 20 seconds per reranking request. This gap highlights the need for a specialized reranker that delivers strong quality under strict latency constraints.

To assess the impact of query-image early interaction, we include ZipRerank$_\mathrm{-50\%}$, where only 50\% of the visual tokens are retained after filtering. Note that Qwen3-VL already applies aggressive pooling (4:1), so this represents a high compression ratio. As expected, token reduction leads to moderate performance degradation, but also reduces runtime. The latency reduction is not fully proportional due to factors like batch size and architectural overhead from our two-step inference process, which is used to extract query embeddings separately.

Overall, these results highlight the practicality of our approach: it achieves strong reranking gains with latency and compute budgets suitable for real-world deployment.

\subsection{Ablation Study}
\label{sec:expt:ablation}

To assess the contribution of each design component, we conduct ablation studies on the ZipRerank variants. Table \ref{tab:ablation_ziprerank_k_dse} summarizes the results of the ablated variants of ZipRerank for reranking of DSE$_{\mathrm{wiki-ss}}$ top 20 results for the MMDocIR benchmark.

\paragraph{w/o First-Stage Pretraining}
It skips the general reranking pretraining and directly finetunes \texttt{Qwen3-VL-8B-Instruct} on Stage~2 training. Results show that removing the first-stage pretraining consistently degrades performance. We attribute this drop to the Stage~2 supervision being less diverse and smaller in scale, as well as the reduced number of effective training steps. This ablation indicates that the first-stage pretraining is necessary for learning strong and generalizable reranking behavior.

\paragraph{w/o Second-Stage Finetuning}
This version skips the multimodal reranking finetuning and trains only with Stage~1 pretraining. Removing the second-stage finetuning consistently reduces performance, indicating that vision-centric training data provides additional supervision beyond Stage~1 and is important for learning robust multimodal reranking.

\paragraph{w/o Single-Logit Decoding}
When we replace single-logit decoding with standard autoregressive generation, ranking performance remains largely comparable, but inference becomes over 6 times slower. This suggests that our training recipe effectively aligns the model with the single-logit decoding mechanism, enabling efficient inference without sacrificing accuracy.

\paragraph{w/o Soft-Ranking Loss}
Replacing the Stage~2 soft-ranking loss with the RankNet loss used in Stage~1 leads to a moderate performance drop, most notably at $k=1$. This result supports the effectiveness of the soft-ranking objective for learning from noisy, teacher-augmented supervision.

\begin{table}[t]
\centering
\small
\setlength{\tabcolsep}{4pt}
\renewcommand{\arraystretch}{1.2}
\caption{Ablation study of \textbf{ZipRerank} on DSE$_{\mathrm{wiki-ss}}$.}
\label{tab:ablation_ziprerank_k_dse}
\resizebox{\columnwidth}{!}{%
\begin{tabular}{lccc}
\toprule
\rowcolor[HTML]{FFF2CC}
\textbf{Method} & \textbf{Macro-Avg $\uparrow$} & \textbf{Micro-Avg $\uparrow$} & \textbf{Time (s) $\downarrow$} \\ 
\midrule
\rowcolor[HTML]{D9EAD3} \multicolumn{4}{c}{\textbf{Recall@$1$}} \\
\midrule
\textbf{ZipRerank} & 64.2 & 63.3 & 0.36 \\
\quad \textbf{w/o first stage} & 63.4 & 62.6 & 0.36 \\
\quad \textbf{w/o second stage} & 61.8 & 60.6 & 0.36 \\
\quad \textbf{w/o\ single-logit decoding} & 64.2 & 63.3 & 2.19 \\
\quad \textbf{w/o soft-ranking loss} & 56.8 & 55.5 & 0.36 \\
\midrule
\rowcolor[HTML]{D9EAD3} \multicolumn{4}{c}{\textbf{Recall@$3$}} \\
\midrule
\textbf{ZipRerank} & 84.8 & 84.5 & 0.36 \\
\quad \textbf{w/o first stage} & 83.8 & 83.8 & 0.36 \\
\quad \textbf{w/o second stage} & 78.8 & 78.6 & 0.36 \\
\quad \textbf{w/o\ single-logit decoding} & 83.7 & 83.2 & 2.19 \\
\quad \textbf{w/o soft-ranking loss} & 79.2 & 79.7 & 0.36 \\
\midrule
\rowcolor[HTML]{D9EAD3} \multicolumn{4}{c}{\textbf{Recall@$5$}} \\
\midrule
\textbf{ZipRerank} & 89.0 & 89.4 & 0.36 \\
\quad \textbf{w/o first stage} & 88.3 & 88.7 & 0.36 \\
\quad \textbf{w/o second stage} & 84.2 & 84.6 & 0.36 \\
\quad \textbf{w/o\ single-logit decoding} & 88.0 & 88.0 & 2.19 \\
\quad \textbf{w/o soft-ranking loss} & 85.4 & 85.8 & 0.36 \\
\bottomrule
\end{tabular}
}
\end{table}

\subsection{Parameter Study}
\label{sec:expt:parameter}

\paragraph{Effect of $\rho$}
We vary the visual-token keep ratio $\rho$ in Eq.~\eqref{eq:topk_prune} and report the resulting LLM time and Recall@$k$.
As predicted by the compute model in Appendix~\ref{app:compute}, Fig.~\ref{fig:parameter-dse-rho} shows that decreasing $\rho$ reduces time in the long-context regime, but also incurs a drop in reranking quality. This trade-off can be tuned to the latency and accuracy needs of a given application, highlighting the flexibility of our query--image early interaction token filtering.

\paragraph{Scaling with $k$}
We vary the number of input candidates $k$ to assess the robustness of ZipRerank as the reranking list grows. As shown in Fig.~\ref{fig:parameter-dse-k}, performance is generally stable for $k \ge 20$, whereas $k=10$ yields lower recall due to a limited candidate set from the retriever. As expected, LLM time increases with $k$ as more candidate pages contribute additional input tokens.

\begin{figure}[t]
  \centering
  \includegraphics[width=0.98\columnwidth]{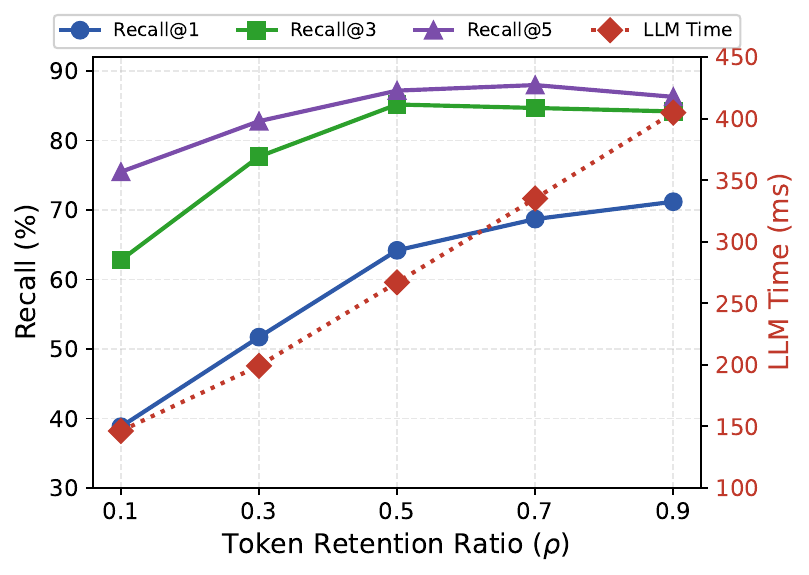} 
  \vspace{-0.5em}
  \caption{Parameter study on the effect of image token keep ratio $\rho$ on reranking effectiveness (Recall@$1,3,5$) and latency (LLM Time in ms) on first stage results from DSE$_{\mathrm{wiki-ss}}$.}
  \label{fig:parameter-dse-rho}
  \vspace{-0.5em}
\end{figure}

\begin{figure}[t]
  \centering
  \includegraphics[width=0.98\columnwidth]{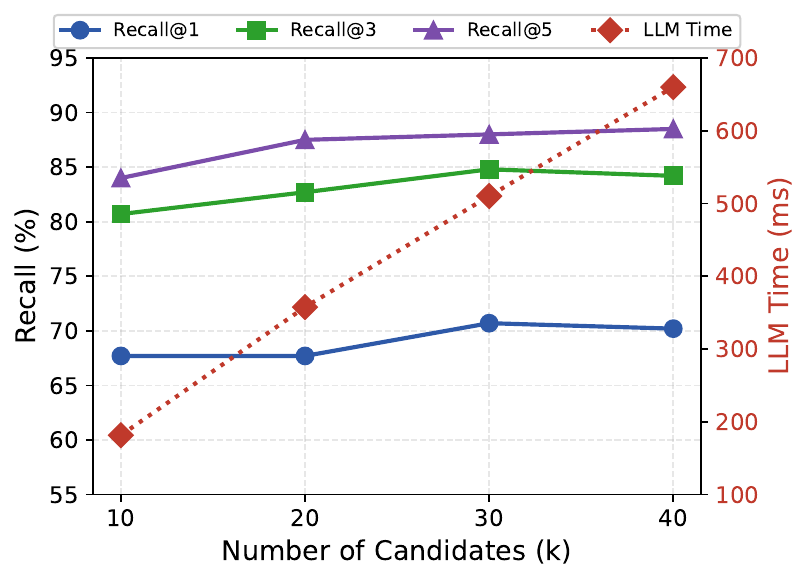} 
  \vspace{-0.5em}
  \caption{Parameter study on the effect of the number of input passages $k$ on reranking effectiveness (Recall@$1,3,5$) and latency (LLM Time in ms) on first stage results from DSE$_{\mathrm{wiki-ss}}$.}
  \label{fig:parameter-dse-k}
\end{figure}

\begin{table}[t]
\centering
\small
\setlength{\tabcolsep}{6pt}
\renewcommand{\arraystretch}{1.15}
\caption{NDCG@5 on ViDoRe (English).}
\label{tab:vidore_english_ndcg5}
\begin{tabular}{lrr}
\toprule
\rowcolor[HTML]{FFF2CC}
\textbf{Method} & \textbf{DSE} & \textbf{ColQwen} \\
\midrule
\textbf{DSE$_{\mathrm{wiki-ss}}$ (First-stage)} & 41.0 & 50.5 \\
\midrule
\textbf{UniME (Listwise)} & 42.6 & 51.4 \\
\textbf{LamRA (Listwise)} & 48.0 & 54.9 \\
\textbf{MM-R5 (Listwise)} & 49.0 & 55.8 \\
\textbf{ZipRerank (Listwise)} & 53.4 & 59.9 \\
\textbf{ZipRerank$_\mathrm{-50\%}$ (Listwise)} & 52.2 & 58.5 \\
\midrule
\textbf{DocReRank (Pointwise)} & 53.3 & 56.5 \\
\textbf{LamRA (Pointwise)} & 56.1 & 60.0 \\
\bottomrule
\end{tabular}
\end{table}
\begin{table}[t]
\centering
\small
\setlength{\tabcolsep}{4pt}
\renewcommand{\arraystretch}{1.2}
\caption{Robustness to Stage-2 teacher strength. We compare teacher models with ZipRerank models trained using their generated rankings. Ma-Avg stands for Macro Average, and Mi-Avg stands for Micro Average of recall@$k$ scores of the individual datasets. } 
\label{tab:robustness_teacher}
\resizebox{\columnwidth}{!}{%
\begin{tabular}{lcccc}
\toprule

\rowcolor[HTML]{FFF2CC}
 & \multicolumn{2}{c}{\textbf{DSE$_{\mathrm{wiki-ss}}$}}
 & \multicolumn{2}{c}{\textbf{ColQwen}} \\[-0.25ex]

\noalign{\global\aboverulesep=0pt\global\belowrulesep=0pt}
\cmidrule(r{0.6em}){2-3}\cmidrule(l{0.6em}){4-5}
\noalign{\global\aboverulesep=.4ex\global\belowrulesep=.65ex}

\rowcolor[HTML]{FFF2CC}
\multirow{-2}{*}{\textbf{Method}}
& \textbf{Ma-Avg $\uparrow$}
& \textbf{Mi-Avg $\uparrow$}
& \textbf{Ma-Avg $\uparrow$}
& \textbf{Mi-Avg $\uparrow$} \\
\midrule

\rowcolor[HTML]{D9EAD3}
\multicolumn{5}{c}{\textbf{Recall@$1$}} \\
\midrule
\textbf{GPT-5-mini} & 70.0 & 69.2 & 70.9 & 70.1 \\
\textbf{$\rightarrow$ ZipRerank$_{mini}$} & 64.2 & 63.3 & 66.1 & 65.1 \\
\textbf{GPT-5-nano} & 59.8 & 59.0 & 63.6 & 62.5 \\
\textbf{$\rightarrow$ ZipRerank$_{nano}$} & 63.8 & 63.6 & 65.2 & 64.8 \\

\midrule
\rowcolor[HTML]{D9EAD3}
\multicolumn{5}{c}{\textbf{Recall@$3$}} \\
\midrule
\textbf{GPT-5-mini} & 88.0 & 88.3 & 88.6 & 87.8 \\
\textbf{$\rightarrow$ ZipRerank$_{mini}$} & 84.8 & 84.5 & 84.9 & 84.4 \\
\textbf{GPT-5-nano} & 79.6 & 79.1 & 82.0 & 81.0 \\
\textbf{$\rightarrow$ ZipRerank$_{nano}$} & 81.6 & 82.2 & 83.6 & 83.0 \\

\midrule
\rowcolor[HTML]{D9EAD3}
\multicolumn{5}{c}{\textbf{Recall@$5$}} \\
\midrule
\textbf{GPT-5-mini} & 90.9 & 91.3 & 91.9 & 91.4 \\
\textbf{$\rightarrow$ ZipRerank$_{mini}$} & 89.0 & 89.4 & 89.3 & 89.1 \\
\textbf{GPT-5-nano} & 84.5 & 84.7 & 87.0 & 86.0 \\
\textbf{$\rightarrow$ ZipRerank$_{nano}$} & 86.6 & 87.1 & 88.5 & 87.8\\

\bottomrule
\end{tabular}
}
\end{table}

\subsection{Generalization to New Benchmark}

We evaluate ZipRerank on the English subset of ViDoRe to test out-of-domain generalization. As shown in Table~\ref{tab:vidore_english_ndcg5}, ZipRerank achieves the best NDCG@$5$ among listwise rerankers with both DSE and ColQwen. It improves over MM-R5 from $49.0$ to $53.4$ with DSE and from $55.8$ to $59.9$ with ColQwen. ZipRerank$_\mathrm{-50\%}$ remains strong despite retaining only half of the visual tokens, suggesting that the efficiency gain is not specific to MMDocIR. Pointwise LamRA achieves the highest score, but requires candidate-wise scoring, whereas ZipRerank performs listwise reranking in a single forward pass.

\subsection{Robustness to Teacher Model}

We study the sensitivity of ZipRerank to teacher strength by replacing GPT-5-mini with the weaker GPT-5-nano teacher. As shown in Table~\ref{tab:robustness_teacher}, ZipRerank trained with GPT-5-nano remains close to the GPT-5-mini version. Moreover, it consistently outperforms the GPT-5-nano teacher itself; for example, on DSE$_{\mathrm{wiki-ss}}$ it improves from $59.0/79.1/84.7$ to $63.6/82.2/87.1$ on Recall@$1/3/5$. This suggests that ZipRerank is robust to weaker teacher supervision and benefits from the proposed two-stage training and soft-ranking objective.
\section{Conclusion}
\label{sec:conclusion}

We present ZipRerank, a framework for training highly efficient listwise multimodal rerankers for long documents. 
ZipRerank employs a two-stage training pipeline that combines large-scale text reranking and vision-centric VQA-style data, together with complementary objectives, to equip the model with strong reranking capability.
To address the prohibitive latency caused by long multimodal input sequences and autoregressive decoding, we propose a lightweight query-image early interaction mechanism for query-aware visual token reduction and adopt single-logit decoding to accelerate inference. 
Extensive experiments on MMDocIR and ViDoRe show that ZipRerank matches or surpasses state-of-the-art multimodal rerankers while being substantially more efficient, making it suitable for deployment in latency-sensitive real-world systems.

\section*{Impact Statement}
This work improves the efficiency of multimodal reranking for long-document retrieval, which can reduce inference cost and energy use in real-world retrieval systems. However, several limitations remain. ZipRerank relies on teacher-generated rankings during Stage~2 training, which may inherit biases or errors from the teacher model. In addition, query-aware pruning can discard useful visual tokens under aggressive compression, especially for fine-grained evidence such as small text, dense tables, or visually similar pages. Our evaluation is also mainly focused on document-image reranking benchmarks, and further validation on more diverse domains, languages, and retrieval settings would strengthen the generality of the conclusions.

Potential negative impacts include enabling faster large-scale search over sensitive documents and amplifying harms from biased or incorrect retrieval results. We recommend deploying the method with appropriate access controls, privacy safeguards, and evaluation for bias and failure cases in downstream applications.





\bibliography{main}
\bibliographystyle{icml2026}

\newpage
\appendix
\onecolumn 

\section{Theoretical Analysis}
\label{app:theory}


\subsection{Compute Scaling Law for ZIPRERANK}
\label{app:compute}

We derive a compact FLOPs model that makes explicit how inference computation scales with
(i) non-image tokens, (ii) visual tokens, (iii) the compression ratio from query--image early interaction (Eq.~\eqref{eq:t2i_maxcos}--\eqref{eq:topk_prune}),
and (iv) the number of generated tokens (including both ranking-format tokens and optional reasoning tokens).
Following our efficiency evaluation protocol, we focus on the \emph{LLM/VLM decoder} compute given cached visual embeddings.
\label{app:compute:setup}

\paragraph{Token accounting}
We rerank $k$ candidate page images. Let
\begin{align*}
n_{\text{text}} &:= \text{\# non-image tokens in the prompt (instructions + query + formatting)}, \\
N_q &:= \text{\# query tokens (a subset of }n_{\text{text}}\text{; cf.\ Eq.~\eqref{eq:query_hidden})}, \\
n_{\text{vis}} &:= \sum_{i=1}^{k} N_i \quad \text{\# visual tokens across all candidates}, \\
\rho &\in (0,1] \quad \text{visual-token keep ratio in Eq.~(\ref{eq:topk_prune})}, \\
n_{\text{vis}}^{(\rho)} &:= \sum_{i=1}^{k} \mathrm{round}(\rho N_i) \approx \rho\,n_{\text{vis}}.
\end{align*}
Define the (decoder) context lengths
\begin{equation*}
n_{\text{full}} := n_{\text{text}} + n_{\text{vis}},
\qquad
n_{\rho} := n_{\text{text}} + n_{\text{vis}}^{(\rho)} \approx n_{\text{text}} + \rho\,n_{\text{vis}}.
\end{equation*}

\paragraph{Output-length decomposition}
Let the total number of generated tokens be
\begin{equation*}
u = u_{\text{rank}} + u_{\text{reason}},
\end{equation*}
where $u_{\text{rank}}$ encodes the output ranking (typically proportional to $k$ for autoregressive listwise rerankers),
and $u_{\text{reason}}$ captures any additional reasoning/explanation tokens (potentially large for reasoning-style rerankers).
A simple parametrization is
\begin{equation*}
u_{\text{rank}} \approx \beta k,
\end{equation*}
for a small formatting constant $\beta$ (e.g., separators/brackets).

\paragraph{Architecture parameters}
Let the decoder have $L$ Transformer blocks and hidden width $d$.
We group constant factors (heads, projections, fused kernels, etc.) into positive constants
$c_{\mathrm{att}}, c_{\mathrm{ffn}}, c_{\mathrm{dec}}$.

\paragraph{Prefill and decoding FLOPs}
A standard approximation for a decoder-only Transformer is
\begin{align}
\mathrm{F}_{\mathrm{prefill}}(n)
&\approx
L\Big(c_{\mathrm{att}}\,d\,n^2 + c_{\mathrm{ffn}}\,d^2\,n\Big),
\label{eq:prefill_refined}\\
\mathrm{F}_{\mathrm{decode}}(n,u)
&\approx
u\cdot L \cdot c_{\mathrm{dec}}\,d\,n,
\label{eq:decode_refined}
\end{align}
where Eq.~\eqref{eq:decode_refined} assumes KV caching so that per-token decoding is linear in $n$.

\paragraph{Baseline autoregressive listwise reranker}
A baseline that processes all visual tokens and generates $u_{\text{base}}=\beta k + u_{\text{reason}}$ tokens has
\begin{equation}
\mathrm{F}_{\text{base}}
~\approx~
\mathrm{F}_{\mathrm{prefill}}(n_{\text{full}})
~
+~
\mathrm{F}_{\mathrm{decode}}(n_{\text{full}}, \beta k + u_{\text{reason}}).
\label{eq:F_base_refined}
\end{equation}

\paragraph{ZIPRERANK}
ZIPRERANK reduces compute along two orthogonal axes described in Section \ref{sec:method:inference}:
(i) it compresses visual tokens via query--image early interaction (Eq.~\eqref{eq:t2i_maxcos}--\eqref{eq:topk_prune}), and
(ii) it eliminates multi-step generation via single-token scoring.
Thus,
\begin{equation}
\mathrm{F}_{\text{zip}}
~\approx~
\underbrace{\mathrm{F}_{\mathrm{score}}(N_q, n_{\text{vis}})}_{\text{early interaction scoring}}
~
+~
\underbrace{\mathrm{F}_{\mathrm{prefill}}(n_{\rho})}_{\text{prefill on compressed context}}
~
+~
\underbrace{\mathrm{F}_{\mathrm{decode}}(n_{\rho}, 1)}_{\text{single-step scoring}}.
\label{eq:F_zip_refined}
\end{equation}
The early interaction step computes (max) cosine similarities between query hidden states and visual embeddings as in Eq.~\eqref{eq:t2i_maxcos}.
A simple FLOPs proxy is
\begin{equation*}
\mathrm{F}_{\mathrm{score}}(N_q, n_{\text{vis}})
~\lesssim~
c_{\mathrm{score}}\, d\, N_q\, n_{\text{vis}},
\label{eq:F_score_refined}
\end{equation*}
which is highly parallelizable and typically lower-order than the quadratic self-attention term when $n_{\text{full}}$ is large.

\paragraph{Speedup expression}
Combining Eq.~\eqref{eq:F_base_refined}--\eqref{eq:F_zip_refined}, the FLOPs speedup is
\begin{equation*}
\mathrm{Speedup}
:=
\frac{\mathrm{F}_{\text{base}}}{\mathrm{F}_{\text{zip}}}
\approx
\frac{
\mathrm{F}_{\mathrm{prefill}}(n_{\text{text}}+n_{\text{vis}})
+
\mathrm{F}_{\mathrm{decode}}(n_{\text{text}}+n_{\text{vis}}, \beta k + u_{\text{reason}})
}{
\mathrm{F}_{\mathrm{score}}(N_q, n_{\text{vis}})
+
\mathrm{F}_{\mathrm{prefill}}(n_{\text{text}}+\rho n_{\text{vis}})
+
\mathrm{F}_{\mathrm{decode}}(n_{\text{text}}+\rho n_{\text{vis}}, 1)
}.
\label{eq:speedup_refined}
\end{equation*}

\paragraph{Two informative regimes}
\textbf{Long-context regime.}
If $n_{\text{vis}} \gg n_{\text{text}}$ and the attention term dominates prefill, then
\begin{equation*}
\frac{\mathrm{F}_{\mathrm{prefill}}(n_{\text{full}})}{\mathrm{F}_{\mathrm{prefill}}(n_{\rho})}
\approx
\left(\frac{n_{\text{text}}+n_{\text{vis}}}{n_{\text{text}}+\rho n_{\text{vis}}}\right)^2
\approx
\frac{1}{\rho^2}.
\end{equation*}

\textbf{Generation-heavy regime.}
If decoding dominates the baseline (large $\beta k$ and/or $u_{\text{reason}}$), then single-token scoring yields an additional gain roughly proportional to the baseline output length:
\begin{equation*}
\frac{\mathrm{F}_{\mathrm{decode}}(n_{\text{full}}, \beta k + u_{\text{reason}})}
{\mathrm{F}_{\mathrm{decode}}(n_{\rho}, 1)}
\approx
(\beta k + u_{\text{reason}})\cdot \frac{n_{\text{text}}+n_{\text{vis}}}{n_{\text{text}}+\rho n_{\text{vis}}}.
\end{equation*}

\subsection{Early Interaction Pruning as a Surrogate for Attention Pooling}
\label{app:early_interaction}

ZIPRERANK scores each visual token by its maximum cosine similarity to any query token (Eq.~\eqref{eq:t2i_maxcos}) and retains the top-$\mathrm{round}(\rho N_i)$ tokens per image (Eq.~\eqref{eq:topk_prune}). Intuitively, this early interaction step provides a cheap query-aware proxy for how an attention layer would pool (aggregate) information from the query when deciding which visual tokens are most relevant.

More concretely, for each visual token $j$, one can define a smooth attention-style pooling score over query tokens via a log-sum-exp (LSE) aggregation
$g_j := \log\sum_{t=1}^{N_q}\exp(s_{t,j})$, where $s_{t,j}$ is the query--visual similarity. This score corresponds to a soft maximum over query tokens. ZIPRERANK instead uses the hard maximum $a_j := \max_t s_{t,j}$, which is cheaper and simpler to compute.

We show the hard max is a tight surrogate for the smooth pooling score, differing by at most an additive $\log N_q$ term. As a result, when the boundary gap between the $K$-th and $(K{+}1)$-th token scores is sufficiently large, selecting the top-$K$ tokens under max-sim pruning matches the top-$K$ tokens selected under the smooth pooling rule.

\paragraph{Setup}
Fix an image and drop the image index.
Let $\bm{h}_t\in\mathbb{R}^d$ be the hidden state of query token $t\in\{1,\dots,N_q\}$ (cf.\ Eq.~\eqref{eq:query_hidden}) and $\bm{v}_j\in\mathbb{R}^d$ be a visual token embedding.
Define cosine similarities
\begin{equation*}
s_{t,j} := \frac{\bm{h}_t^\top \bm{v}_j}{\|\bm{h}_t\|\,\|\bm{v}_j\|}.
\end{equation*}
ZIPRERANK uses
\begin{equation*}
a_j := \max_{1\le t\le N_q} s_{t,j}.
\end{equation*}
A smooth alternative consistent with ``soft'' pooling across query tokens is
\begin{equation*}
g_j := \log \sum_{t=1}^{N_q} \exp(s_{t,j}).
\end{equation*}

\begin{lemma}[Max vs.\ log-sum-exp]
\label{lem:max_lse_refined}
For every $j$,
\begin{equation*}
a_j \le g_j \le a_j + \log N_q.
\end{equation*}
\end{lemma}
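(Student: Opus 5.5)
The argument is the standard sandwich bound for log-sum-exp, obtained by comparing the sum $\sum_{t=1}^{N_q}\exp(s_{t,j})$ termwise with its largest term. Fix $j$ and let $t^\star\in\arg\max_t s_{t,j}$, so $a_j=s_{t^\star,j}$. The index set $\{1,\dots,N_q\}$ is finite and nonempty, so the maximum exists. All $s_{t,j}$ are finite (cosines lie in $[-1,1]$, assuming nonzero $\bm{h}_t,\bm{v}_j$ so the cosine is defined). Hence $g_j$ is well-defined.

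For the lower bound, every summand $\exp(s_{t,j})$ is strictly positive. The sum is therefore at least its single term $\exp(s_{t^\star,j})=\exp(a_j)$. Since $\log$ is monotone increasing, taking logs gives $g_j\ge a_j$.

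For the upper bound, each $s_{t,j}\le a_j$, and $\exp$ is increasing. So each of the $N_q$ summands is at most $\exp(a_j)$, and the sum is at most $N_q\exp(a_j)$. Taking logs gives $g_j\le \log N_q + a_j$.

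None of these steps is a real obstacle. The only point to state explicitly is the well-definedness of $a_j$ and $g_j$: nonempty query set and nonzero vectors. I would also note the equality cases. The lower bound is tight when one query token dominates, i.e.\ as the gaps $a_j-s_{t,j}\to\infty$ for $t\neq t^\star$. The upper bound holds with equality exactly when all $s_{t,j}$ coincide.

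This two-sided bound is what the subsequent claim uses. Since $g_j-a_j\in[0,\log N_q]$ uniformly in $j$, a gap larger than $\log N_q$ between the $K$-th and $(K{+}1)$-th max-sim scores forces the same top-$K$ set under $g$.
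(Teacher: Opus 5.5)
Your proof is correct and is essentially the paper's argument: bound $\sum_t \exp(s_{t,j})$ below by its largest term $\exp(a_j)$ and above by $N_q\exp(a_j)$, then take logarithms. One minor correction to your aside on equality cases: since cosines lie in $[-1,1]$, the gaps $a_j - s_{t,j}$ are at most $2$, so for $N_q \ge 2$ the lower bound is neither attained nor approached in this setting (indeed $g_j - a_j \ge \log(1 + (N_q-1)e^{-2})$), and it holds with equality only when $N_q = 1$.
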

\begin{proof}
Lower bound: $\sum_t \exp(s_{t,j}) \ge \exp(\max_t s_{t,j})$.
Upper bound: $\sum_t \exp(s_{t,j}) \le N_q \exp(\max_t s_{t,j})$.
Taking $\log$ completes the proof.
\end{proof}

\begin{corollary}[Top-$K$ stability under a margin]
\label{cor:topk_stability_refined}
Let $a_{(1)}\ge a_{(2)}\ge\dots$ be the sorted $\{a_j\}$ values.
If $a_{(K)}-a_{(K+1)} > \log N_q$, then the top-$K$ token set selected by $\{a_j\}$ is identical to the top-$K$ token set selected by $\{g_j\}$.
\end{corollary}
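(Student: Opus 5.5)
The plan is to sandwich the smooth scores $g_j$ between the hard scores $a_j$ using Lemma~\ref{lem:max_lse_refined}. Once that is done, the margin hypothesis will force every token inside the max-sim top-$K$ set to have a strictly larger $g$-score than every token outside it.

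\emph{Step 1: the top-$K$ set under $\{a_j\}$ is well defined.} Let $S$ be the set of tokens whose values appear among $a_{(1)},\dots,a_{(K)}$. Since $a_{(K)}-a_{(K+1)}>\log N_q\ge 0$, there is no tie across the boundary. So $S$ has exactly $K$ elements, with $a_i\ge a_{(K)}$ for all $i\in S$ and $a_j\le a_{(K+1)}$ for all $j\notin S$. The degenerate case $N_q=1$ is fine: then $g_j=a_j$ and the claim is trivial, so the argument is uniform.

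\emph{Step 2: the key chain of inequalities.} Fix any $i\in S$ and $j\notin S$. The lower bound of Lemma~\ref{lem:max_lse_refined} applied to $i$, followed by the margin, then the upper bound of the lemma applied to $j$, gives
\begin{equation*}
g_i \;\ge\; a_i \;\ge\; a_{(K)} \;>\; a_{(K+1)}+\log N_q \;\ge\; a_j+\log N_q \;\ge\; g_j .
\end{equation*}
Hence $g_i>g_j$ strictly for every such pair.

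\emph{Step 3: conclude.} Every element of $S$ has a strictly larger $g$-score than every element of its complement, and $|S|=K$. Therefore the $K$ largest $g$-values are attained exactly on $S$, again with no boundary ties. This means the top-$K$ set under $\{g_j\}$ is also $S$. The only point needing care is the ordering and tie issue handled in Steps~1 and~3, namely making sure the strict margin rules out ambiguity in either top-$K$ selection. Beyond that, the argument is a direct consequence of the lemma, and I expect no real obstacle.
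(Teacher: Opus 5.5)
Your proof is correct and follows the paper's intended route. The paper states the corollary as an immediate consequence of Lemma~\ref{lem:max_lse_refined} without writing out a proof, and the implicit justification is exactly your chain $g_i \ge a_i \ge a_{(K)} > a_{(K+1)} + \log N_q \ge a_j + \log N_q \ge g_j$ for $i$ inside and $j$ outside the max-sim top-$K$ set; your handling of boundary ties, using $\log N_q \ge 0$ so that both top-$K$ selections are unambiguous, fills in a detail the paper leaves implicit.
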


\subsection{Pruning Error Bound via Tail Attention Mass}
\label{app:pruning_bound}

We bound the representation error induced by discarding tokens in an attention layer in terms of the total attention mass removed.

\paragraph{Setup}
Let $\bm{\alpha}\in\Delta^{n_{\text{vis}}-1}$ be attention weights over $n_{\text{vis}}$ visual tokens with value vectors $\bm{v}_j$ satisfying $\|\bm{v}_j\|\le V_{\max}$.
The unpruned attention output is
\begin{equation*}
\bm{c} := \sum_{j=1}^{n_{\text{vis}}} \alpha_j \bm{v}_j.
\end{equation*}
Let $S$ be the retained token indices and define the tail mass
\begin{equation*}
\varepsilon := \sum_{j\notin S}\alpha_j.
\end{equation*}
After pruning and renormalization,
\begin{equation*}
\bm{c}' := \sum_{j\in S} \frac{\alpha_j}{1-\varepsilon} \bm{v}_j.
\end{equation*}

\begin{theorem}[Pruning error controlled by tail mass]
\label{thm:tail_mass_refined}
If $\|\bm{v}_j\|\le V_{\max}$ for all $j$, then
\begin{equation*}
\|\bm{c}-\bm{c}'\| \le 2\varepsilon\,V_{\max}.
\end{equation*}
\end{theorem}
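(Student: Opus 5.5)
The plan is a direct decomposition of $\bm{c}-\bm{c}'$ followed by the triangle inequality. We assume $\varepsilon<1$ so that $\bm{c}'$ is well defined; the case $\varepsilon=1$ is excluded by the renormalization.

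First split the unpruned output into retained and discarded parts:
\[
\bm{c}=\sum_{j\in S}\alpha_j\bm{v}_j+\sum_{j\notin S}\alpha_j\bm{v}_j .
\]
Writing $\bm{c}_S:=\sum_{j\in S}\alpha_j\bm{v}_j$, we have $\bm{c}'=\bm{c}_S/(1-\varepsilon)$. Therefore
\[
\bm{c}-\bm{c}'=\sum_{j\notin S}\alpha_j\bm{v}_j+\Bigl(1-\tfrac{1}{1-\varepsilon}\Bigr)\bm{c}_S
=\sum_{j\notin S}\alpha_j\bm{v}_j-\tfrac{\varepsilon}{1-\varepsilon}\,\bm{c}_S .
\]

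Next, bound each term separately.

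\emph{Discarded term.} Since the $\alpha_j$ are nonnegative and $\|\bm{v}_j\|\le V_{\max}$, the triangle inequality gives
\[
\Bigl\|\sum_{j\notin S}\alpha_j\bm{v}_j\Bigr\|\le V_{\max}\sum_{j\notin S}\alpha_j=\varepsilon V_{\max}.
\]

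\emph{Renormalization term.} Because $\bm{\alpha}$ lies in the simplex, $\sum_{j\in S}\alpha_j=1-\varepsilon$. Hence $\|\bm{c}_S\|\le (1-\varepsilon)V_{\max}$, and so
\[
\tfrac{\varepsilon}{1-\varepsilon}\|\bm{c}_S\|\le \varepsilon V_{\max}.
\]
The factor $1-\varepsilon$ cancels exactly, which is the reason the decomposition is arranged this way.

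Adding the two bounds gives $\|\bm{c}-\bm{c}'\|\le 2\varepsilon V_{\max}$.

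There is no real obstacle. The only delicate point is the renormalization term: if one bounded $\|\bm{c}'\|$ naively one would get a spurious $1/(1-\varepsilon)$ blow-up. Keeping the mass of the retained set, $1-\varepsilon$, explicit in the bound on $\|\bm{c}_S\|$ avoids this, since it cancels against the $1/(1-\varepsilon)$ factor.

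An equivalent route is to write $\bm{c}-\bm{c}'=\varepsilon(\bm{m}_{\bar S}-\bm{c}')$. Here $\bm{m}_{\bar S}$ is the renormalized tail average, i.e. the discarded sum divided by $\varepsilon$. Both $\bm{m}_{\bar S}$ and $\bm{c}'$ are convex combinations of the $\bm{v}_j$, so each has norm at most $V_{\max}$, which again gives $2\varepsilon V_{\max}$.
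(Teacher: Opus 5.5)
Your proposal is correct and follows essentially the same route as the paper: split $\bm{c}-\bm{c}'$ into the discarded sum and the renormalization correction $-\tfrac{\varepsilon}{1-\varepsilon}\sum_{j\in S}\alpha_j\bm{v}_j$, then bound each part by $\varepsilon V_{\max}$ using $\sum_{j\in S}\alpha_j=1-\varepsilon$. Your explicit assumption $\varepsilon<1$, which the paper leaves implicit, and your convex-combination variant $\bm{c}-\bm{c}'=\varepsilon(\bm{m}_{\bar S}-\bm{c}')$ are both valid.
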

\begin{proof}
Write
$
\bm{c}-\bm{c}' =
\sum_{j\notin S}\alpha_j \bm{v}_j
+
\sum_{j\in S}\left(\alpha_j-\frac{\alpha_j}{1-\varepsilon}\right)\bm{v}_j.
$
Since
$
\left|\alpha_j-\frac{\alpha_j}{1-\varepsilon}\right|
=
\frac{\varepsilon}{1-\varepsilon}\alpha_j,
$
we have
\begin{align*}
\|\bm{c}-\bm{c}'\|
&\le
\sum_{j\notin S}\alpha_j\|\bm{v}_j\|
+
\sum_{j\in S}\frac{\varepsilon}{1-\varepsilon}\alpha_j\|\bm{v}_j\| \\
&\le
\varepsilon V_{\max}
+
\frac{\varepsilon}{1-\varepsilon}(1-\varepsilon)V_{\max}
=
2\varepsilon V_{\max}.
\end{align*}
\end{proof}

\paragraph{Relating tail mass to score separation}
If $\alpha_j=\exp(g_j)/\sum_{\ell}\exp(g_\ell)$ for scores $\{g_j\}$ and $S$ is the top-$K$ set under $g_j$,
then with gap $\delta := g_{(K)}-g_{(K+1)} > 0$,
\begin{equation}
\varepsilon
~\le~
\frac{\sum_{j>K}\exp(g_{(j)})}{\sum_{\ell\le K}\exp(g_{(\ell)})}
~\le~
\frac{n_{\text{vis}}-K}{K}\exp(-\delta),
\label{eq:tail_gap_refined}
\end{equation}
showing the discarded mass decays exponentially with the boundary gap.

\section{Implementation Details}
\label{app:implementation}

\subsection{Training}

We adopt a two-stage training approach to develop our multimodal document reranker. Stage 1 pre-trains the model on text passages from RankZephyr training data\footnote{\url{https://huggingface.co/datasets/rryisthebest/rank_zephyr_training_data_alpha/}} rendered as images, while Stage 2 fine-tunes on real document images from MMDocIR training dataset. All experiments were conducted on a single NVIDIA H200 GPU. 

\paragraph{Base Model}
We use \textsc{Qwen3-VL-8B-Instruct}~\cite{Qwen3-VL}\footnote{\url{https://huggingface.co/Qwen/Qwen3-VL-8B-Instruct}} as our base vision-language model. During training, we freeze the vision encoder and only update the language model parameters. We use Flash Attention 2~\cite{dao2023flashattention} for efficient attention computation and gradient checkpointing to reduce memory consumption.

\paragraph{Training Objective}
Our model is trained with a combined objective consisting of two loss components:
\begin{enumerate}
    \item \textbf{Language Modeling Loss:} Standard cross-entropy loss on the ranking output sequence, with prompt tokens masked.
    \item \textbf{Ranking Loss:} Stage 1 uses weighted RankNet while Stage 2 uses a soft ranking loss with position-decayed target distribution.
\end{enumerate}

\paragraph{Stage 1: Pre-training on Rendered Text}
In the first stage, we train the model on the RankZephyr dataset~\cite{pradeep2023rankzephyr}, which contains text passages with relevance labels. We render each text passage as a 280$\times$280 pixel image using dynamic font sizing to maximize canvas utilization. The training hyperparameters are summarized in Table~\ref{tab:stage1_hyperparams}.

\paragraph{Stage 2: Fine-tuning on Document Images}
In the second stage, we continue training from the Stage 1 checkpoint on the MMDocIR training dataset~\cite{dong-etal-2025-mmdocir}, which contains real document page images with \textsc{GPT-5-mini}-generated relevance rankings.  The MMDocIR training dataset is built from the following DocVQA datasets: MP-DocVQA \cite{tito2023mp-docvqa}, SlideVQA \cite{tanaka2023slidevqa}, TAT-DQA \cite{zhu2022tatdqa}, SciQAG \cite{wan2024sciqag}, DUDE \cite{landeghem2023dude}, and CUAD \cite{2021HendrycksCUAD}. Images are resized such that the largest dimension does not exceed 1024 pixels. We force the ground truth page to position 0 in the target ranking to ensure consistent supervision. The training hyperparameters are summarized in Table~\ref{tab:stage2_hyperparams}.

\paragraph{Optimizer}
We use 8-bit AdamW~\cite{loshchilov2017decoupled} with no weight decay. Learning rate scheduling follows a cosine decay schedule after linear warmup.

\begin{figure}[t]
  \vskip 0.2in
  \begin{center}
    \centerline{\includegraphics[width=0.5\columnwidth]{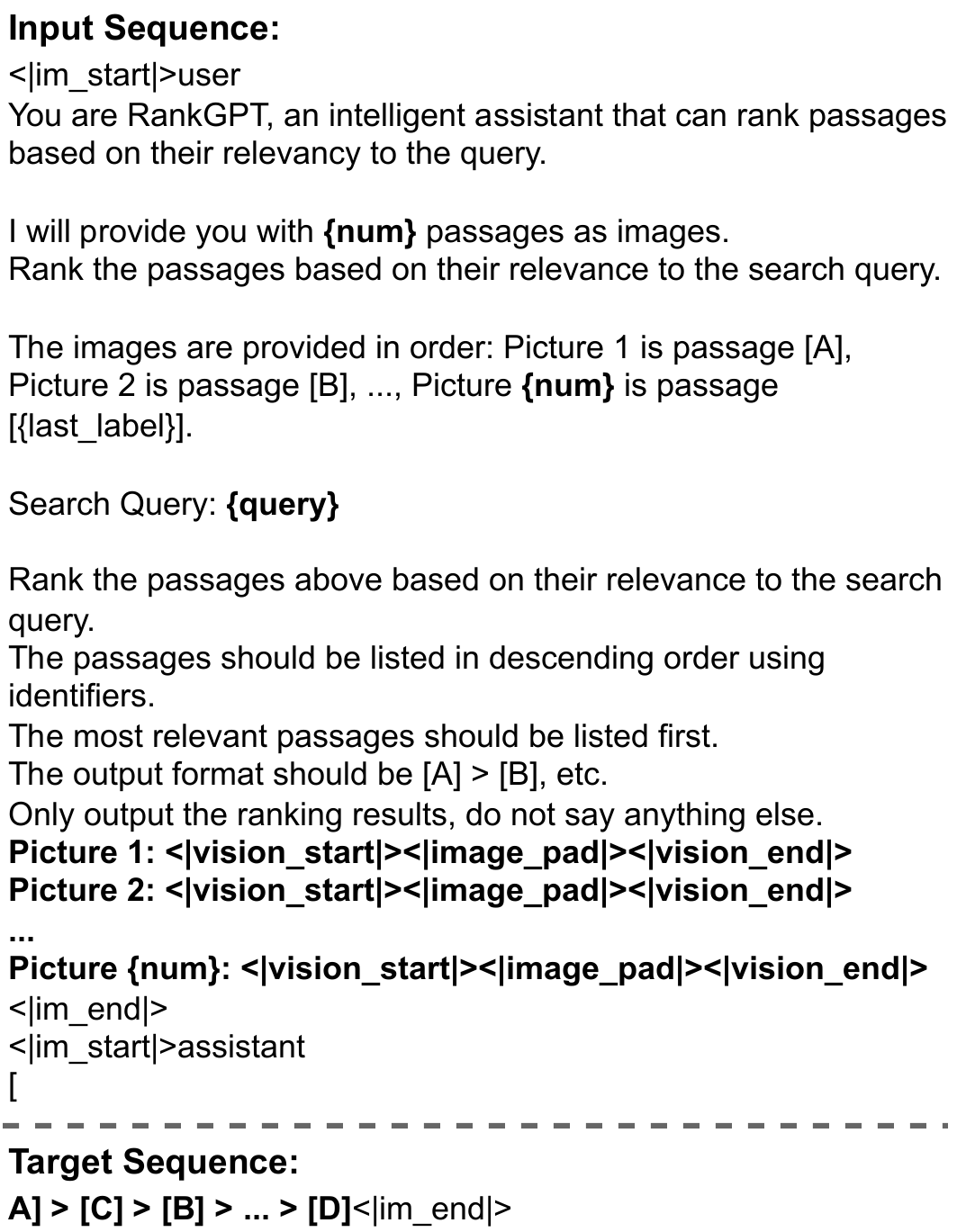}}
    \caption{
      The reranking input prompt template and example target generation sequence for \textbf{ZipRerank} with Qwen3-VL. 
    }
    \label{fig:prompt}
  \end{center}
\end{figure}

\paragraph{Prompt Template}
We use a consistent prompt template and output formatting across both training stages and evaluation in Fig. \ref{fig:prompt}.

\paragraph{Query-Image Early Interaction Pruning}
For the token-pruned variant, we implement query-image early interaction as a lightweight, non-parametric visual token selection module. Given query-token hidden states $\{\bm{h}_t\}_{t=1}^{N_q}$ and visual token embeddings $\{\bm{v}_{i,j}\}_{j=1}^{N_i}$ for image $i$, we first $\ell_2$-normalize both text and visual embeddings and compute the cosine similarity matrix:
\begin{equation*}
s_{t,i,j}
=
\frac{\bm{h}_t^\top \bm{v}_{i,j}}
{\|\bm{h}_t\|\,\|\bm{v}_{i,j}\|}.
\end{equation*}
Each visual token is then assigned its maximum similarity to any query token:
\begin{equation*}
a_{i,j} = \max_{1\le t\le N_q} s_{t,i,j}.
\end{equation*}
For each image, we retain the top-$K_i$ visual tokens according to $a_{i,j}$, where
\begin{equation*}
K_i = \max(1,\mathrm{round}(\rho N_i)),
\end{equation*}
and $\rho$ is the token keep ratio. After top-$K_i$ selection, we sort the selected indices in their original order before feeding them to the LLM, so that the remaining visual tokens preserve their spatial ordering and original positional encodings.

The pruning module has no trainable parameters and is used only for token selection. We apply the same selected indices to all corresponding visual feature streams in Qwen3-VL, including the deepstack features, so that the compressed visual sequence remains aligned across layers. 

\begin{table}[t]
\centering
\caption{Training hyperparameters for Stage 1 and Stage 2.}
\label{tab:training_hyperparams}
\begin{subtable}[t]{0.48\textwidth}
  \centering
  \caption{Stage 1 (rendered text) hyperparameters.}
  \label{tab:stage1_hyperparams}
  \begin{tabular}{ll}
  \toprule
  \textbf{Hyperparameter} & \textbf{Value} \\
  \midrule
  Dataset & Rank-Zephyr \\
  Epochs & 3 \\
  Learning rate & $3 \times 10^{-6}$ \\
  Batch size & 8 \\
  Gradient accumulation steps & 4 \\
  Effective batch size & 32 \\
  LR scheduler & Cosine \\
  Warmup steps & 100 \\
  Ranking loss & Weighted RankNet \\
  Ranking loss weight ($\lambda_1$) & 10.0 \\
  Max candidates per query & 20 \\
  Image size & 280$\times$280 \\
  Precision & BF16 \\
  \bottomrule
  \end{tabular}
\end{subtable}
\hfill
\begin{subtable}[t]{0.48\textwidth}
  \centering
  \caption{Stage 2 (document images) hyperparameters.}
  \label{tab:stage2_hyperparams}
  \begin{tabular}{ll}
  \toprule
  \textbf{Hyperparameter} & \textbf{Value} \\
  \midrule
  Dataset & MMDocIR Training Set \\
  Epochs & 1 \\
  Learning rate & $3 \times 10^{-6}$ \\
  Batch size & 2 \\
  Gradient accumulation steps & 8 \\
  Effective batch size & 16 \\
  LR scheduler & Cosine \\
  Warmup steps & 50 \\
  Ranking loss & Soft ranking \\
  Ranking loss weight ($\lambda_2$) & 1.0 \\
  Position decay ($\gamma$) & 0.5 \\
  Max candidates per query & 20 \\
  Precision & BF16 \\
  \bottomrule
  \end{tabular}
\end{subtable}
\end{table}
\begin{table}[!t]
\centering
\small
\caption{Model checkpoints used in experiments.}
\label{tab:model_checkpoints}
\begin{tabular}{lll}
    \toprule
    \textbf{Model} & \textbf{Type} & \textbf{Checkpoint / API} \\
    \midrule
    \multicolumn{3}{l}{\textit{First-Stage Retrievers}} \\
    DSE & Dense Retriever & \texttt{MrLight/dse-qwen2-2b-mrl-v1} \\
    ColQwen & Late-Interaction & \texttt{vidore/colqwen2-v1.0} \\
    \midrule
    \multicolumn{3}{l}{\textit{Rerankers}} \\
    Qwen3-VL-8B & VLM & \texttt{Qwen/Qwen3-VL-8B-Instruct} \\
    Llama-3.2-11B-Vision & VLM & \texttt{meta-llama/Llama-3.2-11B-Vision} \\
    MM-R5 & Multimodal Reranker & \texttt{i2vec/MM-R5} \\
    LamRA & Multimodal Reranker & \texttt{code-kunkun/LamRA-Rank} \\
    DocReRank & Multimodal Reranker & \texttt{DocReRank/DocReRank-Reranker} \\
    UniME-V2 & Multimodal Reranker & \texttt{TianchengGu/UniME-V2-reranker-Qwen25VL-7B} \\
    \bottomrule
\end{tabular}
\end{table}

\subsection{Evaluation}

\paragraph{First-Stage Retrieval}
We evaluate with two first-stage retrievers to demonstrate the generalizability of our reranking approach:
\begin{itemize}
    \item DSE$_{\mathrm{wiki-ss}}$ \cite{ma-etal-2024-unifying}: Document Screenshot Embedding encodes queries and document pages into a shared embedding space using a vision-language model. Retrieval scores are computed via dot product similarity.
    \item ColQwen \cite{faysse2025colpali}: A ColBERT-style~\cite{khattab2020colbert} late-interaction retriever that represents queries and documents as multi-vector embeddings. Retrieval scores are computed using MaxSim (maximum similarity) between query and document token embeddings.
\end{itemize}

For each query, the first-stage retriever returns the top-20 candidate pages within the same document. These retrieval results are cached for efficient reranking evaluation.

\paragraph{Reranking Setup}
Given the top-20 candidates from the first-stage retriever, the reranker processes all candidates simultaneously in a single forward pass. Document page images are resized such that the largest dimension does not exceed 1024 pixels. For ranking prediction, we extract logits at the position immediately after the prompt (before the first output token) and compute scores for each candidate letter token (A, B, C, ...). The candidates are then reranked by their corresponding logit scores.

\paragraph{Checkpoints}
Table \ref{tab:model_checkpoints} summarizes the checkpoint names used in the experiments. 

\section{Additional Experimental Results}

\subsection{Ablation and Parameter Study Results on ColQwen}
\begin{table}[t]
\centering
\small
\renewcommand{\arraystretch}{1.2}
\caption{Ablation study of \textbf{ZipRerank} on ColQwen.}
\label{tab:ablation_ziprerank_k_colqwen}
\begin{tabular}{lccc}
    \toprule
    \rowcolor[HTML]{FFF2CC}
    \textbf{Method} & \textbf{Macro-Avg $\uparrow$} & \textbf{Micro-Avg $\uparrow$} & \textbf{Time (s) $\downarrow$} \\
    \midrule
    \rowcolor[HTML]{D9EAD3} \multicolumn{4}{c}{\textbf{Recall@$1$}} \\
    \midrule
    \textbf{ZipRerank} & 66.1 & 65.1 & 0.36 \\
    \quad \textbf{w/o first stage} & 64.0 & 63.5 & 0.36 \\
    \quad \textbf{w/o second stage} & 63.0 & 61.6 & 0.36 \\
    \quad \textbf{w/o\ single-logit decoding} & 66.1 & 65.1 & 2.23 \\
    \quad \textbf{w/o soft-ranking loss} & 64.0 & 63.5  & 0.36 \\
    \midrule
    \rowcolor[HTML]{D9EAD3} \multicolumn{4}{c}{\textbf{Recall@$3$}} \\
    \midrule
    \textbf{ZipRerank} & 84.9 & 84.4 & 0.36 \\
    \quad \textbf{w/o first stage} & 83.8 & 83.5 & 0.36 \\
    \quad \textbf{w/o second stage} & 81.1 & 80.3 & 0.36 \\
    \quad \textbf{w/o\ single-logit decoding} & 84.5 & 83.6 & 2.23 \\
    \quad \textbf{w/o soft-ranking loss} & 82.9 & 82.1 & 0.36 \\
    \midrule
    \rowcolor[HTML]{D9EAD3} \multicolumn{4}{c}{\textbf{Recall@$5$}} \\
    \midrule
    \textbf{ZipRerank} & 89.3 & 89.1 & 0.36 \\
    \quad \textbf{w/o first stage} & 88.6 & 88.4 & 0.36 \\
    \quad \textbf{w/o second stage} & 86.1 & 85.4 & 0.36 \\
    \quad \textbf{w/o\ single-logit decoding} & 89.1 & 88.5 & 2.23 \\
    \quad \textbf{w/o soft-ranking loss} & 87.8 & 87.3 & 0.36 \\
    \bottomrule
\end{tabular}
\end{table}

\begin{figure}[t!]
  \centering
  \begin{subfigure}[t]{0.49\columnwidth}
    \centering
    \includegraphics[width=\columnwidth]{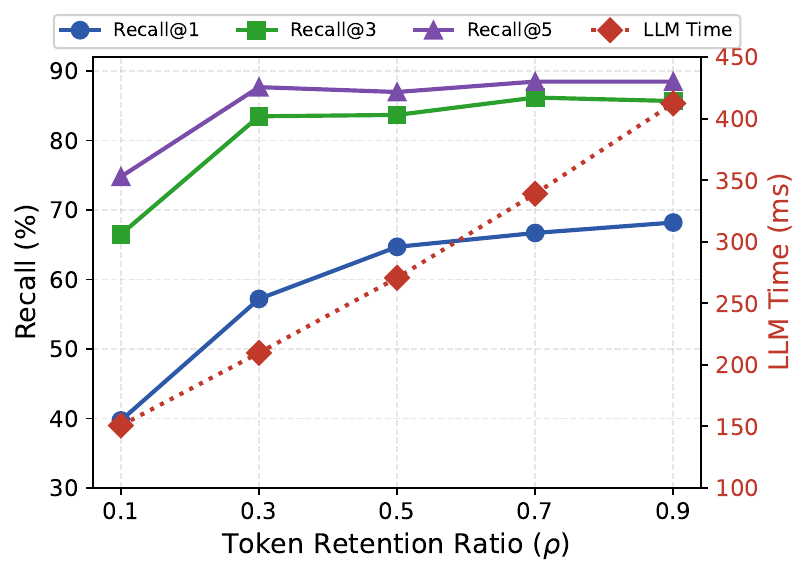}
    \caption{Effect of image token keep ratio $\rho$ on reranking effectiveness (Recall@$1,3,5$) and latency (LLM time in ms).}
    \label{fig:parameter-colqwen-rho}
  \end{subfigure}
  \hfill
  \begin{subfigure}[t]{0.49\columnwidth}
    \centering
    \includegraphics[width=\columnwidth]{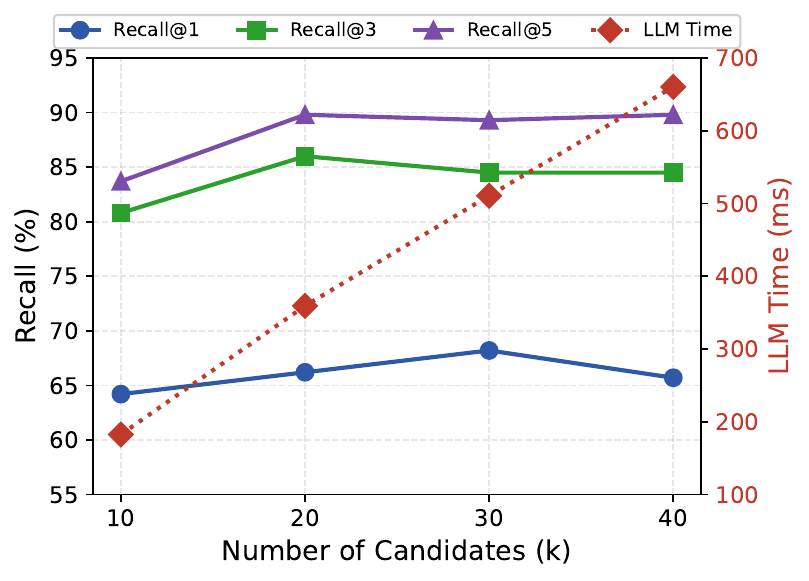}
    \caption{Effect of the number of input passages $k$ on reranking effectiveness (Recall@$1,3,5$) and latency (LLM time in ms).}
    \label{fig:parameter-colqwen-k}
  \end{subfigure}
  \caption{Parameter studies on first-stage results from ColQwen.}
  \label{fig:parameter-colqwen}
\end{figure}

In addition to the main ablation and parameter studies in Secs.~\ref{sec:expt:ablation} and~\ref{sec:expt:parameter} based on DSE$_{\mathrm{wiki-ss}}$, we report corresponding results with ColQwen in Table~\ref{tab:ablation_ziprerank_k_colqwen}, Fig.~\ref{fig:parameter-colqwen-rho}, and Fig.~\ref{fig:parameter-colqwen-k}. Overall, we observe a similar pattern with ColQwen as the first-stage retriever.

\subsection{Random Pruning vs. Text-to-Image Pruning}
\label{app:random_vs_t2i_pruning}
\begin{table}[t]
\centering
\small
\setlength{\tabcolsep}{4pt}
\renewcommand{\arraystretch}{1.2}
\caption{Comparison between random pruning and text-to-image (T2I) pruning under different visual-token keep ratios with DSE$_{wikiss}$. We report Recall@$1/3/5$ on MMDocIR.}
\label{tab:random_vs_t2i_pruning}
\begin{tabular}{lcccccc}
\toprule

\rowcolor[HTML]{FFF2CC}
 & \multicolumn{3}{c}{\textbf{Random Pruning}}
 & \multicolumn{3}{c}{\textbf{T2I Pruning}} \\[-0.25ex]

\noalign{\global\aboverulesep=0pt\global\belowrulesep=0pt}
\cmidrule(r{0.6em}){2-4}\cmidrule(l{0.6em}){5-7}
\noalign{\global\aboverulesep=.4ex\global\belowrulesep=.65ex}

\rowcolor[HTML]{FFF2CC}
\multirow{-2}{*}{\textbf{Keep Ratio}}
& \textbf{R@1} & \textbf{R@3} & \textbf{R@5}
& \textbf{R@1} & \textbf{R@3} & \textbf{R@5} \\
\midrule

$0.1$ & 32.2 & 61.2 & 73.7 & \textbf{40.1} & \textbf{66.2} & \textbf{78.0} \\
$0.3$ & 48.2 & 72.5 & 81.4 & \textbf{57.4} & \textbf{78.4} & \textbf{84.6} \\
$0.5$ & 54.7 & 77.0 & 84.7 & \textbf{61.1} & \textbf{80.8} & \textbf{86.6} \\
$0.7$ & 58.8 & 79.9 & 85.9 & \textbf{61.5} & \textbf{82.4} & \textbf{87.1} \\
$0.9$ & \textbf{62.5} & 82.2 & 87.3 & 62.1 & \textbf{82.4} & \textbf{87.4} \\

\bottomrule
\end{tabular}
\end{table}

To verify that the benefit of token pruning comes from query-aware selection rather than simply reducing the number of visual tokens, we compare our text-to-image (T2I) pruning strategy with random pruning under the same keep ratio. Random pruning uniformly retains the same number of visual tokens per image, while T2I pruning keeps tokens with the highest query-image similarity scores.

As shown in Table~\ref{tab:random_vs_t2i_pruning}, T2I pruning consistently outperforms random pruning across almost all keep ratios and metrics. The advantage is especially clear under aggressive compression. For example, at a keep ratio of $0.3$, T2I pruning improves Recall@$1/3/5$ by $9.2/5.9/3.2$ points over random pruning. The gap becomes smaller as the keep ratio increases, since most visual tokens are retained in both settings. These results confirm that query-aware pruning preserves more task-relevant visual information than random token retention.

\subsection{Correlation Between Pruning Scores and LLM Attention}
\label{app:pruning_attention_corr}

The analysis above motivates text-to-image pruning as a cheap surrogate for attention-style pooling. We further provide an empirical sanity check by measuring whether the pruning scores correlate with the LLM's actual attention over visual tokens.

For each visual token $j$, we use the pruning score
\begin{equation*}
a_j = \max_{1\le t\le N_q} s_{t,j},
\end{equation*}
where $s_{t,j}$ is the cosine similarity between query token $t$ and visual token $j$. We then run the unpruned model and compute the attention mass assigned to each visual token at the identifier-scoring position. Specifically, for layer $\ell$, we average attention over heads:
\begin{equation*}
b_{\ell,j}
=
\frac{1}{H}\sum_{h=1}^{H} A^{(\ell,h)}_{p,j},
\end{equation*}
where $A^{(\ell,h)}_{p,j}$ denotes the attention from the scoring position $p$ to visual token $j$ in head $h$. We report the Spearman rank correlation between $\{a_j\}$ and $\{b_{\ell,j}\}$.

We observe a moderate positive correlation, approximately $0.3$, in mid-to-late LLM layers. This suggests that the proposed pruning score captures meaningful query-relevant visual saliency, even though it is computed before the full multimodal forward pass. The correlation is not expected to be perfect, since attention also reflects positional, formatting, and inter-candidate interactions. Nevertheless, together with the random-pruning comparison in Appendix~\ref{app:random_vs_t2i_pruning}, this result supports that text-to-image pruning preserves more useful visual tokens than uninformed token retention.

\subsection{Ranking Quality and Failure Behavior Beyond Recall@$k$}
\label{app:ranking_quality_failure}
\begin{table}[t]
\centering
\small
\setlength{\tabcolsep}{4pt}
\renewcommand{\arraystretch}{1.2}
\caption{Ranking quality and failure behavior on MMDocIR with the DSE$_{wikiss}$ first-stage retriever. Fail\% denotes the percentage of queries where the top-ranked page is incorrect. Near Miss denotes failures where the ground-truth page is ranked $2$--$3$, and Catastrophic Miss denotes failures where it is ranked lower than $5$.}
\label{tab:ranking_quality_failure}
\begin{tabular}{lcccccc}
\toprule
\rowcolor[HTML]{FFF2CC}
\textbf{Method} &
\textbf{P@1 $\uparrow$} &
\textbf{nDCG@5 $\uparrow$} &
\textbf{Mean Rank $\downarrow$} &
\textbf{Fail\% $\downarrow$} &
\textbf{Near Miss } &
\textbf{Cat. Miss } \\
\midrule
\textbf{DSE$_{\mathrm{wiki-ss}}$} &
50.6 & 65.6 & 3.60 & 49.4 & 52.0 & 36.3 \\
\textbf{MM-R5} &
\textbf{73.1} & 77.9 & 2.84 & \textbf{26.9} & 41.3 & 45.5 \\
\textbf{ZipRerank} &
70.0 & \textbf{79.4} & \textbf{2.62} & 30.0 & \textbf{57.8} & \textbf{30.7} \\
\textbf{ZipRerank$_\mathrm{-50\%}$} &
68.9 & 78.2 & 2.71 & 31.1 & 52.9 & 32.6 \\
\bottomrule
\end{tabular}
\end{table}

Recall@$k$ only measures whether the ground-truth page appears within the top-$k$ results, and does not fully capture the quality of the produced ranking. We therefore provide additional ranking and failure analyses on MMDocIR with the DSE first-stage retriever. As shown in Table~\ref{tab:ranking_quality_failure}, MM-R5 obtains the best P@$1$, while ZipRerank achieves better overall ranking quality, with the highest nDCG@$5$ and the lowest mean rank. 

ZipRerank also exhibits fewer severe failures. Although its Fail\% is slightly higher than MM-R5, its errors are more often near misses at ranks $2$--$3$, while MM-R5 has a higher fraction of catastrophic misses beyond rank $5$. This suggests that ZipRerank tends to place relevant evidence pages close to the top even when it misses rank $1$.

\subsection{Efficiency Analysis}
\label{sec:efficiency_analysis}
\begin{table*}[t]
\centering
\small
\setlength{\tabcolsep}{6pt}
\renewcommand{\arraystretch}{1.3}
\caption{End-to-end efficiency on MMDocIR.}
\label{tab:end_to_end_efficiency_mmdocir}
\resizebox{\linewidth}{!}{%
\begin{tabular}{lrrrrrrr}
\toprule
\rowcolor[HTML]{FFF2CC}
\textbf{Method} &
\textbf{Vision ms} &
\textbf{Filter ms} &
\textbf{LLM ms} &
\textbf{Total ms} &
\textbf{TFLOPs/query} &
\textbf{Cached QPS} &
\textbf{Peak GPU (GB)} \\
\midrule
\textbf{ZipRerank (Listwise)} &
181.2 & -- & 357.4 & 538.5 & 179.7 & 2.80 & 21.71 \\
\textbf{ZipRerank$_\mathrm{-50\%}$ (Listwise)} &
180.2 & 4.5 & 269.4 & 454.1 & 84.9 & 3.65 & 20.05 \\
\textbf{MM-R5 (Listwise)} &
873.2 & -- & 3233.8 & 4107.0 & 263.2 & 0.31 & 23.04 \\
\textbf{LamRA (Listwise)} &
352.7 & -- & 529.3 & 881.9 & 368.2 & 1.89 & 28.31 \\
\textbf{DocReRank (Pointwise)} &
401.1 & -- & 737.8 & 1140.8 & 54.8 & 1.35 & 4.54 \\
\bottomrule
\end{tabular}
}
\end{table*}

Table~\ref{tab:end_to_end_efficiency_mmdocir} reports end-to-end efficiency, including vision encoding, query-aware filtering, and LLM reranking. ZipRerank takes $538.5$ ms per query, compared with $4107.0$ ms for MM-R5, achieving a $7.6\times$ end-to-end speedup. With cached vision embeddings, ZipRerank reaches $2.80$ QPS, substantially higher than MM-R5's $0.31$ QPS.

The token-pruned variant further reduces total latency to $454.1$ ms and lowers computation from $179.7$ to $84.9$ TFLOPs/query, with only $4.5$ ms filtering overhead. This shows that single-pass listwise scoring is the main source of latency reduction, while query-aware pruning provides additional compute and memory savings.

\end{document}